\newtheorem{definition}{\emph{\underline{Definition}}}
\newtheorem{lemma}{\emph{\underline{Lemma}}}
\newtheorem{corollary}{\emph{\underline{Corollary}}}
\newtheorem{proposition}{\emph{\underline{Proposition}}}
\newtheorem{remark}{\bf \emph{\underline{Remark}}}
\def\({\left(}
\def\){\right)}
\def\b0{{\mathbf{0}}}
\renewcommand{\mod}{\tx{mod}}
\newcommand{\tx}[1]{\texttt{#1}}
\begin{document}
\captionsetup[figure]{name={Fig.}}
\title{Frequency-switching Array Enhanced Physical Layer Security: A Movable Antenna Perspective} 
\author{Cong Zhou, Changsheng You,~\IEEEmembership{Member,~IEEE}, Chao Zhou, Weidong Mei,~\IEEEmembership{Member,~IEEE}, \\Zhi Chen,~\IEEEmembership{Senior Member,~IEEE}, Chengwen Xing,~\IEEEmembership{Member,~IEEE} and Rui Zhang,~\IEEEmembership{Fellow,~IEEE}
\thanks{Cong Zhou is with the Department of Electronic and Electrical Engineering, Southern University of Science and Technology, Shenzhen 518055, China and also with the School of Electronics and Information Engineering, Harbin Institute of Technology, Harbin, 150001, China (e-mail: zhoucong@stu.hit.edu.cn).}
\thanks{Changsheng You and Chao Zhou are with the Department of Electronic and Electrical Engineering, Southern University of Science and Technology, Shenzhen 518055, China (e-mail: youcs@sustech.edu.cn; zhouchao2024@mail.sustech.edu.cn).}
\thanks{Weidong Mei and Zhi Chen are with the National Key Laboratory of Wireless Communications, University of Electronic Science and Technology of China, Chengdu 611731, China. (e-mail: wmei@uestc.edu.cn; chenzhi@uestc.edu.cn).}
\thanks{Chengwen Xing is with the School of Information and Electronics, Beijing Institute of Technology, Beijing 100081, China (e-mail: xingchengwen@gmail.com).}
\thanks{Rui Zhang is with the Department of Electrical and Computer Engineering, National University of Singapore, Singapore 117583 (e-mail: elezhang@nus.edu.sg).}
\thanks{Corresponding author: Changsheng You.}
}
\maketitle
\begin{abstract}
	In this paper, we propose a new \emph{frequency-switching array} (FSA) to enhance the physical-layer security (PLS) in the presence of multiple eavesdroppers (Eves), where the carrier frequency can be flexibly switched and small frequency offsets can be imposed on each antenna at the secrecy transmitter (Alice).
	First, we analytically show that by flexibly controlling the carrier frequency parameters, FSAs can effectively form uniform/non-uniform sparse arrays, hence resembling existing mechanically controlled \emph{movable antennas} (MAs) via the control of inter-antenna spacing and providing additional degree-of-freedom (DoF) in the beam manipulation.
	Although the proposed FSA suffers from additional path-gain attenuation in the received signals, it can overcome several hardware and signal processing issues incurred by MAs, such as limited positioning accuracy, extra hardware and energy cost.
	Then, a secrecy-rate maximization problem is formulated under the constraints on the frequency control.
	To shed useful insights, we first consider a \emph{secrecy-guaranteed} problem with a null-steering constraint for which maximum ratio transmission (MRT) beamformer is considered at Alice and the frequency offsets are set as uniform frequency increment.
	Interestingly, it is shown that the proposed FSA can flexibly realize null-steering over Eve in both the angular domain (by tuning carrier frequency) and range domain (by controlling per-antenna frequency offset), thereby achieving improved PLS performance.
	Then, for the general case, we propose an efficient algorithm to solve the formulated non-convex optimization problem by using the block coordinate descent (BCD) and projected gradient ascent (PGA) techniques.
	Finally, numerical results demonstrate that the proposed FSA achieves superior secrecy rate performance over conventional fixed-position array, while it only suffers a slight secrecy rate loss than the existing mechanically controlled MA, yet at much lower hardware and signal processing costs.
\end{abstract}
\begin{IEEEkeywords}
	Movable antenna, frequency diverse array, physical-layer security, frequency switching array.
\end{IEEEkeywords}

\section{Introduction}
The future sixth-generation (6G) and beyond wireless communication systems are envisioned to be capable of accommodating emerging services, such as virtual reality (VR), digital twin (DT) and artificial intelligence (AI) \cite{you2024next}.
However, owing to the broadcasting nature of wireless channels, preventing confidential information from being intercepted by unauthorized parties becomes a critical challenge.
As such, physical layer security (PLS) has emerged as a promising approach to prevent eavesdropping while ensuring reliable transmissions \cite{ma2018security}.
Although the high frequency bands enable the deployment of a large number of antennas at the transmitter to form pencil-like directional beams and thereby enhance PLS performance, wireless communication systems still face severe security threats when eavesdroppers (Eves) are located in close proximity to legitimate users, especially when they share similar spatial angles with the intended receiver \cite{ma2018security}.

To overcome this issue, movable antennas (MAs) have been recently proposed to reconfigure wireless channels between the transmitter and receiver, for which positions of antennas can be flexibly adjusted within a confined region to achieve improved rate performance \cite{shao20246d,10848372,shao2025tutorial}.
In particular, for PLS, MAs/FAs are capable of achieving null-steering over Eves while maintaining full array gains at the legitimate user \cite{zhu2023movableNullSteering}, thereby significantly enhancing PLS performance.
However, MAs generally face several challenges in practice especially in high frequency bands.
First, the mechanical control of antenna movement in MAs practically introduces high hardware and energy costs as well as bulky size, especially in high frequency bands \cite{ding2025flexible}.
Second, the rapid variation of the channel imposes strict requirements on the response latency of MAs, which further degrades the performance of MAs \cite{zhu2023movable}.

In this paper, we propose a new frequency-switching array (FSA) to enhance the performance of PLS systems, which can effectively overcome the hardware complexity of mechanically controlled MAs while maintaining comparable spatial degree-of-freedoms (DoFs) in the beam control for enhancing PLS performance.

\vspace{-6pt}
\subsection{Related Works}
PLS has been extensively investigated for fixed-position arrays (FPAs) in far-field communications.
For example, the authors in \cite{li2017secure} proposed to use zero-forcing (ZF) precoding to achieve null-steering over Eves, which effectively prevents confidential information leakage.
However, when the legitimate user locates in the vicinity of Eves (i.e., with strong channel correlation), ZF precoding may reduce the signal power received at the legitimate user.
In addition, artificial noise (AN) was utilized to enhance PLS by acting as an interference signal to Eves, while having little impact on the received signals at legitimate users \cite{yu2020robust}.
Moreover, to flexibly control spatial beams, the authors in \cite{hu2024secure} proposed to use mechanically controlled MAs to enhance PLS, where they optimized the positions of antennas and the digital beamformer to maximize the achievable secrecy rate via the projected gradient ascent method \cite{ranjan2023gradient}.
Then, this work was further extended in \cite{10623758} to scenarios where the secrecy transmitter (Alice) has no prior knowledge of the channel state information (CSI) of Eves.
Specifically, the secrecy outage probability was minimized by jointly optimizing the antennas' positions at Alice and the beamforming vector.
However, when the legitimate receiver (Bob) and the Eves are located at the same angle, MAs become ineffective.
This is because the far-field channel correlation between Bob and Eves is strongly correlated in the angular domain.
This limitation can be alleviated only if the movable region of MAs is sufficiently large, such that both Bob and Eves fall within Alice’s near-field region, and the MA can adjust its position to form distinct spatial channels for them. Nevertheless, achieving such a large movable region may be difficult in practice, especially in space-constrained  applications.

To address this issue, frequency diverse arrays (FDAs) have  emerged as a promising technology to form \emph{range-dependent} beams with small frequency offset imposed on each antenna, which can enhance PLS performance even when Bob ad Eves are located at the same angle \cite{lin2017physical}.
For example, in \cite{cheng2024secure}, the authors proposed to minimize the transmit power while maximizing the secrecy rate in an FDA-assisted secure transmission system, where different frequency offsets are imposed on different antennas.
However, when multiple Eves are considered, a larger frequency offset at Alice is required to provide more DoFs in spatial beamforming, thereby mitigating power leakage to all Eves \cite{qiu2020multi}.
However, this may not be always feasible in practice, as the frequency increment is typically assumed to be much smaller than the carrier frequency.
To tackle this issue, a new movable FDA (M-FDA) was proposed in \cite{cheng2025movable}, which combines the advantages of MAs and FDAs such that antennas' positions and frequency offsets across different antennas can be flexibly adjusted.
In particular, by exploiting extra spatial DoFs provided by MAs along with FDAs, the authors jointly optimized the beamforming vector, antennas' positions and frequency offsets to maximize the achievable secrecy rate.
However,  M-FDAs still suffer from the drawbacks of MAs in high frequency bands as mentioned above (e.g., high hardware and energy costs, as well as considerable response latency).
How to overcome the hardware limitations of MAs while maintaining their spatial DoF remains an open problem.

\vspace{-10pt}
\subsection{Motivation and Contributions}
Motivated by the above, in this paper, we propose a new flexibly FSA to enhance PLS performance, where the carrier frequency can be flexibly switched and small frequency offsets can be imposed across different antennas.
It is shown that such frequency switching is an efficient way to form electronically controlled arrays resembling MAs as both of them can effectively control the inter-antenna spacing relative to electrical length, thereby improving secrecy-rate performance.
In particular, the proposed FSA achieves more precise control of the effective antenna positions with very low or negligible antenna configuration latency, making it a promising solution for applications in high frequency bands.
The main contributions of this paper are summarized as follows.
\begin{itemize}
	\item First, we propose a new FSA to enhance PLS performance where the carrier frequency and frequency offsets imposed at different antennas are flexibly adjusted.
	Interestingly, we show that switching the carrier frequency of the proposed FSA is equivalent to forming a \emph{sparse uniform linear array} (S-ULA) with expanded array aperture, while adjusting frequency offsets across antennas corresponds to forming a \emph{sparse non-uniform linear array}.
	Then, to shed useful insights, we consider a typical one-Eve-one-Bob system and formulate a secrecy-guaranteed problem under the constraint of null-steering towards Eve.
	It is revealed that in the optimal solution, sufficiently large maximum frequency offsets applied to the antennas can flexibly achieve null-steering towards Eve, even when Bob and Eve are located at the same angle.
	However, when the frequency offsets are not sufficiently large and/or when Bob and Eve are located at the same range, null-steering can be achieved by switching the carrier frequency only.
	\item Second, for the general system setup with multiple Eves, we formulate an optimization problem to maximize the secrecy rate at Bob under the practical frequency control constraints.
	To solve this non-convex problem, we leverage the block coordinate descent (BCD) technique to iteratively optimize the transmit beamformer, frequency increment vector (FIV) and carrier frequency, respectively.
	Then, the optimal solution to the subproblem of transmit beamformer optimization is obtained by the generalized Rayleigh quotient, while the other two subproblems are solved by the projected gradient ascent method. 
	\item Finally, numerical results are presented to demonstrate the effectiveness of the proposed FSA in enhancing PLS performance.
	It is shown that the proposed FSA achieves a higher secrecy rate than that of FPAs and FDAs, and is only slightly outperformed by MAs.
	This is because, although the proposed FSA achieves flexible beam control comparable to that of MAs, it experiences larger power attenuation due to the increased carrier frequency. Nevertheless, compared to FDAs and FPAs, the proposed FSA offers much higher beam-control flexibility, thereby achieving better secrecy-rate performance.
\end{itemize}
\vspace{-10pt}
\section{System Model}
We consider an FSA-enabled PLS communication system, where Alice transmits confidential information to a Bob in the presence of $M$ Eves\footnote{The proposed FSA can also enhance PLS performance of multiple Bobs by maximizing the sum-secrecy-rate of all Bobs \cite{liu2025physical} and the proposed projected gradient ascent (PGA) method in Section \ref{Sec:solving optimization problem} can be directly applied to solve this problem.}.
A uniform linear array (ULA) comprising $N$ antennas is equipped at Alice, while Bob and $M$ Eves are equipped with a single antenna.
In addition, both Bob and $M$ Eves are assumed to be located in the far-field region of Alice.

\vspace{-9pt}
\subsection{Frequency-Switching Array}
For the considered FSA, let $d_0 = \frac{\lambda_0}{2}$ denote the inter-antenna spacing where $\lambda_0$ represents the carrier wavelength.
In particular, the basic carrier frequency is denoted by $f_0 = \frac{c}{\lambda_0}$, where $c = 3\times 10^8$ m/s represents the light speed.
Without loss of generality, we assume that the number of antennas at Alice is an odd integer and the ULA is centered at the origin.
As such, the coordinate of the $n$-th antenna at the ULA is $(0, y_n = \delta_nd_0)$, where $\delta_n = \frac{2n-N-1}{2}, n \in \mathcal{N}\triangleq\{1,2,\cdots,N\} $.

Unlike conventional MA systems where each antenna's position in the ULA can be flexibly adjusted, in the proposed FSA system, the position of all antennas are fixed, while the carrier frequency and the radiation frequency of each antenna can be flexibly switched in a given range to enhance PLS. 
It will be shown in Section \ref{sec:physical meaning} that such frequency switching is equivalent to \emph{repositioning} the antennas in the fixed ULA, resembling the MA.

Specifically, the carrier frequency can be switched, ranging from the basic frequency $f_0$ to the maximum frequency $f_{\rm H}$, while different antennas can apply additional small frequency increments/offsets independently ~\cite{cheng2024secure}.
Mathematically, the frequency of the $n$-th antenna is given by $f_n = f_{\rm c} + \Delta_{f_n}, n\in\mathcal{N}$, where $f_{\rm c}$ denotes the tunable carrier frequency common to all antennas, satisfying $f_0 \le f_{\rm c} \le f_{\rm H}$.
In addition, $\Delta_{f_n}$ is the frequency increment at the $n$-th antenna and we assume that $0 \le \Delta_{f_n} \le \Delta_{f_{\max}}$ with $\Delta_{f_{\max}} \ll f_{\rm H} $.

\vspace{-9pt}
\subsection{Channel Model}
We assume that Bob and each Eve $m$ are located at $ (\theta_{0}, r_{0}) $ and $ (\theta_{m}, r_{m}), m \in \mathcal{M} \triangleq \{1,2,\cdots,M\}$, where $\theta_{0}$ ($\theta_{m}$) and $r_{0}$ ($r_{m}$) denote the spatial angle and range for the Bob ($m$-th  Eve). Moreover, Alice is assumed to have the perfect CSI of Eves, while the CSI acquisition methods for Eves and robust beamforming design under imperfect CSI will be discussed in Section \ref{sec:Extension to CSI Acquisition for Eve}.
Let ${s}_{\rm B}(t)$ denote the confidential baseband signal for the desired Bob. Then, the passband signal transmitted by the $n$-th antenna at Alice and received by Bob is given by
\begin{equation}
	y_n(t) = {\rm Re}\big\{\gamma_{\rm c} g_{{\rm B},f_0}e^{\jmath 2\pi (f_{\rm c}+\Delta_{f_n})(t - \frac{r_n}{c})}{s}_{\rm B}(t)\big\}, 
\end{equation}
where $g_{{\rm B},f_0}$ denotes the path gain of the LoS path with respect to (w.r.t.) the basic frequency $f_0$, while $\gamma_{\rm c}=\frac{f_0}{f_{\rm c}}$ represents the \emph{attenuation factor} induced by additional path loss of the switching carrier frequency.
In addition, $r_n \approx r_0 - n\delta_n d_0\sin\theta_0$ denotes the range between the $n$-th antenna at Alice and Bob.
As such, by removing the carrier signal $e^{\jmath 2\pi f_{\rm c}t}$, the equivalent baseband signal can be obtained as
\begin{align}
	y_n^{\prime} &= \gamma_{\rm c} g_{{\rm B},f_0}e^{\jmath 2\pi\big((f_{\rm c}+\Delta_{f_n})\frac{r_n}{c} - {\Delta_{f_n}}t\big)}{s}_{\rm B}(t),   \label{eq:time varying signal}\\
	&=\!\gamma_{\rm c} g_{{\rm B},f_0}e^{\jmath 2\pi\big(\frac{(f_{\rm c} \!+\! \Delta_{f_n})\delta_n d_0\sin\theta_0}{c} - \frac{f_{\rm c} + \Delta_{f_n}}{c}r_0 + \Delta_{f_n}t\big)} {s}_{\rm B}(t).\notag
\end{align}

It is observed that the phase of $y_n^{\prime}$ varies over time, which is generally difficult for beamforming in practice.
Nevertheless, this problem can be addressed by using the receive processing chain in \cite{xu2021range} (to be discussed in Section \ref{sec:Transmission Protocol Design}), which ensures that the reference phase within a coherence time block remains time-invariant.
Hence, for brevity, the time-varying components ($2\pi \Delta_{f_n}t$) in the channel and signal modelling are ignored.
In this paper, we mainly consider PLS in high frequency bands, for which the NLoS channel paths exhibit negligible power due to the severe path-loss and shadowing.
As such, the LoS channel modelling is adopted for Bob and $M$ Eves.
Let $\mathbf{h}^H_{\rm B}(f_{\rm c}, \boldsymbol{\Delta}_f) \in \mathbb{C}^{1 \times N}$ denote the channel from Alice to Bob, which is modeled as 
\begin{align}
	\mathbf{h}^H_{\rm B}(f_{\rm c}, \boldsymbol{\Delta}_f) &= \sqrt{N} \gamma_{\rm c} g_{{\rm B},f_0} e^{\jmath\frac{2\pi f_{\rm c}r_0}{c}}\mathbf{a}^H(f_{\rm c}, \boldsymbol{\Delta}_f, \theta_{0}, r_0).
\label{eq:channel steering vector}
\end{align}
Herein, the channel steering vector $\mathbf{a}(f_c, \boldsymbol{\Delta}_f, {\theta}, r)$ is given by  \cite{antonik2006frequency, lan2016range}
\begin{equation}
	\begin{aligned}
		\big[\mathbf{a}(f_{\rm c}, \boldsymbol{\Delta}_f,\theta_0,r_0)\big]_n \!=\! \frac{1}{\sqrt{N}} e^{\jmath{2\pi}\big(\frac{f_{\rm c}\delta_n d_0\sin\theta_0}{c} \!+\! \frac{\Delta_{f_n}}{c}(\delta_n d_0\sin\theta_0 - r_0)\big)},
	\end{aligned}
	\notag
\end{equation}
where $\boldsymbol{\Delta}_f = [\Delta_{f_1}, \Delta_{f_2},\cdots, \Delta_{f_n}]^T$ denotes the FIV.
Since $\Delta_{f_n} \ll f_{\rm c}$, the phase term $e^{\jmath2\pi\frac{\Delta_{f_n}}{c}\delta_n d_0\sin\theta_0}$ can be neglected \cite{wang2015frequency}.
For example, given $N = 16$, $f_{\rm c} = 60$ GHz and $\Delta_{f_n} = 1$ MHz, we have $\max\limits_{\theta_0} e^{\jmath2\pi\frac{\Delta_{f_n}}{c}\delta_n d_0\sin\theta_0} = 8.4\times 10^{-4} \ll 2\pi $.
Hence, the channel steering vector in \eqref{eq:channel steering vector} can be approximated as~\cite{lan2016range}
\begin{equation}
	\label{eq:steering vector}
	\begin{aligned}
		\big[\mathbf{a}(f_{\rm c}, \boldsymbol{\Delta}_f,\theta_0,r_0)\big]_n \!=\! \frac{1}{\sqrt{N}} e^{\jmath{2\pi}\big(\frac{f_{\rm c}\delta_n d_0\sin\theta_0}{c}-\frac{\Delta_{f_n}}{c}r_0\big)},\forall n\in \mathcal{M}.
	\end{aligned}
\end{equation}

Similarly, the channel from Alice to Eve $m$, denoted by $\mathbf{h}^H_{{\rm E},m}(f_{\rm c}, \boldsymbol{\Delta}_f, t) \in \mathbb{C}^{1 \times N}$, can be modeled as
\begin{align}
	\mathbf{h}^H_{{\rm E},m}(f_{\rm c}, \boldsymbol{\Delta}_f) \!=\! \sqrt{N} \gamma_{\rm c}g_{{\rm E},f_0,m} e^{\jmath\frac{2\pi f_{\rm c}r_m}{c}}\mathbf{a}^H(f_{\rm c}, \boldsymbol{\Delta}_f,\! \theta_{m},\! r_m),
	\notag
\end{align}
where $ g_{{\rm E},f_0,m} $ represents the complex-valued path gain in the basic frequency and $ \mathbf{a}^H(f_{\rm c}, \boldsymbol{\Delta}_f, \theta_{m}, r_{m}) $ denotes the far-field channel steering vector, which can be defined similar to~\eqref{eq:steering vector}.

\vspace{-9pt}
\subsection{Signal Model}
We consider that at Alice, the digital beamforming architecture is adopted for the proposed FSA.
Let $\mathbf{w}_{\rm B} \in \mathbb{C}^{N \times 1}$ denote the digital beamformer at Alice for transmitting confidential information to Bob.
As such, the received signal at Bob is given by $y_{{\rm B}} = \mathbf{h}^H_{\rm B}(f_{\rm c}, \boldsymbol{\Delta}_f)\mathbf{w}_{\rm B}s_{\rm B}(t) + z_{{\rm B}}$, where $ z_{{\rm B}} \sim \mathcal{CN}(0,\sigma_{\rm B}^2)$ denotes the additive white Gaussian noise (AWGN) with $\sigma_{\rm B}^2$ representing the noise power at Bob.
Then, the achievable rate at Bob, in bits per second per Hertz (bps/Hz), is given by 
\begin{equation}
	R_{\rm B}
	= \log_2\Big( 1 + \frac{ |\mathbf{h}^H_{\rm B}(f_{\rm c}, \boldsymbol{\Delta}_f)\mathbf{w}_{\rm B}|^2}{\sigma_{\rm B}^2}\Big).
\end{equation}

Meanwhile, the received signal at Eve $m$ for intercepting the information of ${s}_{\rm B}(t)$ is given by $y_{{\rm E},m} = \mathbf{h}^H_{{\rm E},m}(f_{\rm c}, \boldsymbol{\Delta}_f)\mathbf{w}_{\rm B}s_{\rm B}(t) + z_{{{\rm E},m}}, \forall m\in \mathcal{M}$, where $ z_{{{\rm E},m}} \sim \mathcal{CN}(0,\sigma_{{\rm E}}^2)$ is the AWGN at Eve $m$.
Without loss of generality, we assume that the noise powers at Bob and the eavesdroppers are identical, i.e., $\sigma_{{\rm E}} = \sigma_{{\rm B}} \triangleq \sigma_{0}$.
We consider a challenging case where $M$ Eves cooperatively intercept the confidential signal transmitted to Bob.
As such, the eavesdropping rate of $M$ Eves for wiretapping ${s}_{\rm B}$ is given by \cite{hu2024secure}
\begin{equation}
	R_{{\rm E}}
	= \log_2\bigg( 1 + \frac{\sum_{m=1}^{M}|\mathbf{h}^H_{{\rm E},m}(f_{\rm c}, \boldsymbol{\Delta}_f)\mathbf{w}_{\rm B}|^2}{\sigma_0^2}\bigg).
\end{equation}
Then, the achievable secrecy rate (in bps/Hz) is given by \cite{zheng2022physical}
\begin{equation}
	\begin{aligned}
		&R_{{\rm FSA}}(\mathbf{w}_{\rm B}(t), \boldsymbol{\Delta}_f, f_{\rm c}) = [R_{\rm B} - R_{{\rm E}}]^+\\
		&~=\! \bigg[\log_2\!\bigg(\frac{|\mathbf{h}^H_{\rm B}(f_{\rm c}, \boldsymbol{\Delta}_f)\mathbf{w}_{\rm B}|^2/\sigma_0^2 \!+\! 1}{\sum_{m=1}^{M}|\mathbf{h}^H_{{\rm E},m}(f_{\rm c}, \boldsymbol{\Delta}_f)\mathbf{w}_{\rm B}|^2/\sigma_0^2 \!+\! 1}\bigg)\bigg]^+\!\!\!,
	\end{aligned}
	\label{eq:achievable secrecy rate}
\end{equation}
where $ [x]^{+} \triangleq \max \{x, 0\}$.

\begin{figure}[t]
	\vspace{-5pt}
	\centering
	\includegraphics[width=0.5\textwidth]{./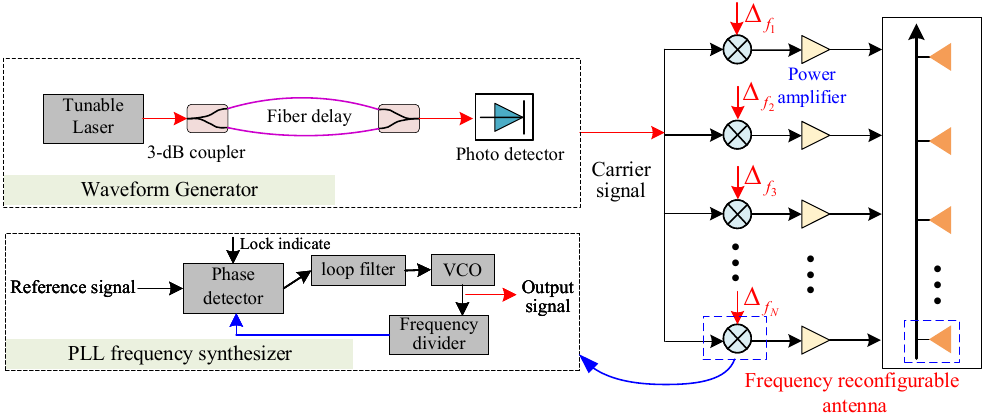}
	\caption{Hardware architecture of the proposed FSA.} 
	\label{Fig:hardware}
	\vspace{-15pt}
\end{figure}
\section{Hardware Architecture of Proposed FSA}
\label{sec:physical meaning}
In this section, we first present the hardware architecture for the proposed FSA.
Then, we analytically show that frequency switching across antennas can effectively form uniform or non-uniform sparse arrays, similar to antenna position movement.
\vspace{-20pt}
\subsection{Hardware Architecture}
The proposed FSA is composed of three main components, namely, a continuous waveform generator to adjust the carrier frequency, a phase-locked loop (PLL)-based frequency synthesizer to induce small frequency offsets, and a digital beamforming module including antennas and tunable power amplifiers (PAs).
In the following, we introduce the main components and the corresponding practical prototypes.
\begin{itemize}
	\item{\bf Continuous-wave generator:} The tunable carrier frequency function of the proposed FSA is implemented using the wavelength tunable feature of a distributed Bragg reflector (DBR) laser.
	Specifically, the output lightwave from the DBR laser is split into two optical paths by a 3-dB fiber coupler. One path propagates directly, while the other is delayed by a fiber delay line, after which the two beams are recombined, which is subsequently converted into a radio frequency (RF) signal by the high-speed photo detector (PD).
	In~\cite{zhu2013fast}, the implementation and experimental validation of such a waveform generator were reported, showing a continuously tunable carrier frequency range up to 38.45 GHz.
	\item{\bf Frequency synthesizer:} For the small frequency offset imposed on each antenna, it can be achieved by a PLL-based frequency synthesizer composed of a phase detector, a loop filter, a voltage-controlled oscillator (VCO), and a frequency divider \cite{huang2009frequency}.
	Specifically, a controllable oscillation is generated by converting the filtered phase difference signal from the phase detector into a voltage-controlled output.
	The frequency divider then scales down this output and feeds it back for comparison with the stable reference, which drives the loop into lock and ensures that the final output frequency is maintained as an integer multiple of the reference, thereby enabling precise and flexible frequency synthesis.
	A practical prototype of PLL-based frequency synthesizer used for realizing the FDA can be found in~\cite{huang2009frequency} with high frequency resolution less than 1 Hz \cite{e75-b_8_739}.
	\item{\bf Broad band devices:} Apart from modules for adjusting carrier frequency and imposing small frequency offsets, broadband devices are also crucial to implement the proposed FSA.
	For example, the authors in \cite{hussain2020compact} implemented a frequency reconfigurable antenna, which can operate at a broad band ranging from 2.05 to 10.7 GHz.
	In addition, for broad band commercial PAs, Analog Devices' ADL8124 and ADH-ALH445S can operate in the frequency bands of 1--20 GHz and 18--43 GHz, respectively.
	To achieve a PA with a wider operational frequency range, multiple narrow-band PAs can be employed to cover different frequency bands and then combined using RF switches.
\end{itemize}
\begin{figure}[t]
	\centering
	\includegraphics[width=0.4\textwidth]{./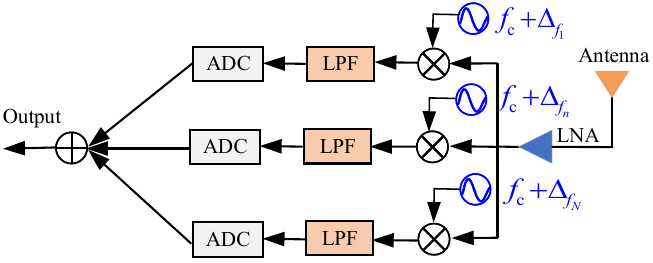}
	\caption{Illustration of the receive processing chain to remove time-varying terms.} 
	\label{Fig:Receive processing chain}
	\vspace{-15pt}
\end{figure}
\vspace{-10pt}
\subsection{Implementation Issues}
\label{sec:Transmission Protocol Design}
Introducing small frequency offsets across different antennas generally leads to a time-varying channel (see (2)), which can be compensated at the receiver side through receive processing chains.
Specifically, as illustrated in Fig. \ref{Fig:Receive processing chain}, the received signal is first amplified by a low-noise amplifier (LNA) and then divided into $N$ streams, for which the $n$-th stream is mixed with the corresponding carrier of the $n$-th transmitted antenna (i.e., $e^{\jmath 2\pi(f_{\rm c}+\Delta_{f_n})t}$).
The input signal of the $n$-th low-pass filter (LPF) after mixing is given by
\begin{equation}
	y_{\rm In}^{(n)} = \underbrace{w_ne^{\jmath \phi_n}s_{\rm B}(t)}_{\rm desired~signal} + \underbrace{\sum_{i = 1,i\neq n}^{N}w_ie^{\jmath \phi_i}e^{\jmath 2\pi(\Delta_{f_i}-\Delta_{f_n})t}s_{\rm B}(t)}_{\rm multi-branch~interference},
	\notag
\end{equation}
where $w_n$ and $\phi_n$ denote the signal combining coefficient and propagation delay phase of the $n$-th stream, respectively.
Then, the resulting $N$ signals are subsequently passed through LPFs to remove the multi-branch interference~\cite{xu2021range}. 
On one hand, considering that the baseband signal $s_{\rm B}(t)$ has a bandwidth of $B$, we should ensure $|\Delta_{f_i}-\Delta_{f_n}| \ge B, \forall i,n \in \mathcal{N}$, so that the resulting multi-branch interference can be effectively filtered out.
On the other hand, when $|\Delta_{f_i}-\Delta_{f_n}| = 0$, the output signal of the $n$-th LPF is $(w_ne^{\jmath \phi_n} + w_ie^{\jmath \phi_i})s_{\rm B}(t)$, which still yields the desired baseband signal.
As such, the above conditions can be combined as $|\Delta_{f_i}-\Delta_{f_j}|\big(|\Delta_{f_i}-\Delta_{f_j}|-B\big) \ge 0, \forall i,j \in \mathcal{N}$ to perfectly remove the time-varying components $e^{\jmath 2\pi \Delta_{f_n}t}, \forall n$ in \eqref{eq:time varying signal}.

Since conventional FDA radar is a special case of the proposed FSA, the parameter-based approach is adopted for channel estimation, where subspace-based localization methods (see, e.g.,  \cite{cui2018localization}) can be used to estimate the positions (including angle and range) of Bob and Eves for subsequent channel reconstruction.
After obtaining the CSI of Bob and Eves, secure beamforming design can be performed (as detailed in Section \ref{Sec:solving optimization problem}) to optimize the operating carrier frequency and frequency offsets imposed on individual antennas.
It is worth noting that Alice and Bob can establish an initial communication link at the basic carrier frequency $f_0$ without frequency offsets, after which Bob is informed of the operating frequency band and the antenna frequency offsets through subsequent pilot frames.
This procedure is similar to the conventional pseudorandom sequence used in frequency hopping (FH) to indicate the communication frequency to the receiver \cite{Chen2019High}.
If the FH sequence (i.e., communication frequency) is exposed to Eves, FH cannot  guarantee PLS performance.
However, the proposed FSA remains unaffected by such frequency exposure, since it ensures secrecy through spatial beam nulling.

\vspace{-8pt}
\subsection{Physical Interpretation: A Movable Antenna Perspective}
\label{sec:Physical Interpretation: A Movable Antenna Perspective}
In the following, we show that the proposed FSA resembles the conventional mechanically controlled MA via changing antenna positions.
Specifically, the channel response vector in \eqref{eq:steering vector} can be rewritten as
{\small\begin{equation}
	\notag
	\begin{aligned}
	\big[\mathbf{a}(f_{\rm c},\! \boldsymbol{\Delta}_f,\!\theta,\! r)\big]_{\!n} \!\!=\!\! \frac{1}{\sqrt{N}} \exp\!\Big({\jmath\frac{2\pi}{c}f_0{\underbrace{\big(\frac{f_{\rm c}}{f_0}\delta_n d_0\!\!-\!\!\frac{\Delta_{f_n}r}{f_0\sin\theta}\big)}_{t_n}}\sin\theta}\Big)\!,
	\end{aligned}
	\end{equation}}where $t_n = \frac{f_{\rm c}}{f_0}x_n-\frac{\Delta_{f_n}r}{f_0\sin\theta}$ denotes the equivalent position of the $n$-th antenna in the FPA with $x_n = \delta_nd_0$ representing the physical position of the $n$-th antenna.
It is observed that $t_n$ is determined by two factors: the carrier frequency switching and the frequency offsets imposed on individual antennas. Their effects are summarized as follows.

\begin{figure}[t]
	\centering
	\includegraphics[width=0.365\textwidth]{./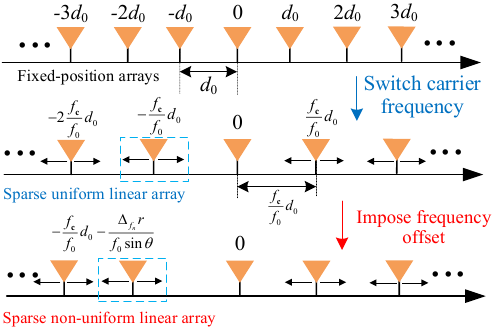}
	\caption{Illustration of the equivalence between the proposed frequency-switching array and a movable antenna array.} 
	\label{Fig:EquivalentMoveAntenna}
	\vspace{-14pt}
\end{figure}
\begin{itemize}
	\item The first term $\frac{f_{\rm c}}{f_0} x_n$ in $t_n$ arises from carrier frequency switching.
	Specifically, the position of the $n$-th antenna shifts from $x_n$ to $\frac{f_{\rm c}}{f_0} x_n$, which is scaled by a factor of $\frac{f_{\rm c}}{f_0}$.
	As a result, when there is no frequency offset across antennas, the equivalent inter-antenna spacing in the \emph{dense} FPA becomes $\frac{f_{\rm c}}{f_0}d_0$, thus effectively forming an \emph{S-ULA}, as illustrated in Fig. \ref{Fig:EquivalentMoveAntenna}.
	This adjustment provides flexible control of the beam-width (by expanding the effective array aperture) as well as null steering, which allows to effectively suppress the received signal power at Eves.
	\item The second term $\frac{\Delta_{f_n}r}{f_0\sin\theta} $ in $t_n$ is contributed by frequency offsets imposed on individual antennas.
	Since $\Delta_{f_n}$ is relatively small w.r.t. the carrier frequency, the position of each antenna is adjusted by only a small fraction of the wavelength on top of the S-ULA, as illustrated in Fig. \ref{Fig:EquivalentMoveAntenna}.
	Notably, these displacements can vary across different antennas, hence effectively forming a \emph{sparse non-uniform linear array} which enables more flexible beam control for PLS enhancement.
	In addition, it can be shown that the effective displacement $\frac{\Delta_{f_n}r}{f_0\sin\theta}$ (also known as FDAs \cite{wang2015frequency}) can form \emph{range-dependent} beams, hence more effectively reducing the received signal power at Eves.
	This approach can even achieve null steering when Bob and the Eves are located at the same angle.
\end{itemize}
\vspace{-10pt}
\subsection{Problem Formulation}
Based on the above discussions on the implementation issues, we formulate a practical optimization problem in this subsection, which aims to maximize the achievable secrecy rate of Bob by jointly optimizing the digital beamformer $\mathbf{w}_{\rm B}$, the carrier frequency $f_{\rm c}$ and the per-antenna FIV $\boldsymbol{\Delta}_f$, subject to the transmit power constraint for Alice and the frequency control constraints.
Specifically, the optimization problem can be formulated as
\begin{subequations}
	\begin{align}
	&({\rm P1}) \max_{\{f_{\rm c}, \boldsymbol{\Delta}_f, \mathbf{w}_{\rm B}\}} R_{{\rm S}}(\mathbf{w}_{\rm B}, \boldsymbol{\Delta}_f, f_{\rm c}) \\
	&~~~~~~~~~~~\text {s.t.}~\|\mathbf{w}_{\rm B}\|^2 \leq P_{\max}, \label{subeq:power constraint} \\
	&~~~~~~~~~~~~~~~~f_0 \le f_{\rm c} \le f_{\rm H},\label{subeq:carrier constraint} \\
	&~~~~~~~~~~~~~~~~0 \leq \Delta_{f_n} \le \Delta_{f_{\max}}, ~\forall n,\label{subeq:frequency constraint}\\
	&~~~~~~~~~~~~~~~~|\Delta_{f_i}-\Delta_{f_j}|\big(|\Delta_{f_i}\!-\!\Delta_{f_j}|\!-\!B\big) \ge 0, \forall i,j,\label{subeq:Filtering constraint}
	\end{align}
\end{subequations}
where $B$ denotes the bandwidth of the baseband signal $s_{\rm B}$
Herein, \eqref{subeq:power constraint} represents the transmit power constraint at Alice with $P_{\max}$ denoting its maximum transmit power.
In addition, \eqref{subeq:carrier constraint} and \eqref{subeq:frequency constraint} represent the constraints on the carrier frequency and the per-antenna frequency increments, respectively.
Finally, the constraint \eqref{subeq:Filtering constraint} ensures successfully mitigating the time-varying components ($2\pi \Delta_{f_n}t$).
In this paper, narrow-band signaling is considered, and thus we have $\Delta_{{f}_{\max}} \gg B$.
\section{Single-BOB-Single-Eve Scenario}
\label{sec:beam width control}
In this section, to facilitate performance analysis, we characterize the beam control performance of proposed FSA under \emph{uniform} frequency increment over each antenna, while non-uniform frequency offsets will be discussed in Section \ref{Sec:solving optimization problem}.
To shed useful insights, we consider the following typical case.
\begin{itemize}
	\item A uniform frequency offset is considered.
	Mathematically, the FIV is given by {\small$\boldsymbol{\Delta}_f = [\Delta_f, 2\Delta_f,\cdots, N\Delta_f]^T$}.
	\item There is a single Eve and the maximum ratio transmission (MRT) beamformer (i.e.,  $\mathbf{w}_{\rm B} = \sqrt{P}_{\rm max}\mathbf{a}(f_{\rm c}, \boldsymbol{\Delta}_f,\theta_{\rm B},r_{\rm B})$) is adopted for data transmissions to Bob \cite{liu2025physical}.
	\item Eve is located within the main-lobe of the FPA, i.e., $\theta_{\rm E}\in (\theta_{\rm B}-\frac{2}{N},\theta_{\rm B}+\frac{2}{N})$, where $\theta_{\rm B}$ and $\theta_{\rm E}$ denote the AoAs of Bob and Eve, respectively.
	As such, we have $\Delta_{\theta} \triangleq \theta_{\rm B}-\theta_{\rm E} \in [-\frac{2}{N},\frac{2}{N}]$.
	This represents a challenging scenario where Eve can effectively wiretap confidential information.	
\end{itemize}
\begin{remark}[General case]
	\emph{Although the assumption of uniform frequency offsets reduces DoFs in the beam control, it allows for tractable analysis of PLS performance.
	The general case with non-uniform frequency offsets and multiple cooperative Eves is more complicated for performance analysis; as such, we solve the corresponding optimization problem in Section \ref{Sec:solving optimization problem}.
	In addition, for scenarios where Eve lies outside the main lobe of the beam steered towards Bob, i.e., $\theta_{\rm E} \notin (\theta_{\rm B} - \frac{2}{N}, \theta_{\rm B} + \frac{2}{N})$, effective eavesdropping cannot be achieved at Eve.
	In this scenario, it is not necessary to adjust the carrier frequency and frequency offsets to enhance PLS for Bob.
	Correspondingly, for MAs, it is not necessary to move antennas' positions.
	Hence, we only consider the challenging case $\theta_{\rm E}\in (\theta_{\rm B}-\frac{2}{N},\theta_{\rm B}+\frac{2}{N})$ in this paper.}
\end{remark}

Based on the above, (P1) reduces to
\begin{subequations}
	\begin{align}
	&({\rm P3}) \max_{\{f_{\rm c}, \boldsymbol{\Delta}_f\}} R_{{\rm FSA}} \triangleq \notag\\ &\bigg[\!\log_2\!\!\Big(\frac{\rho_{\rm B} + \sigma_0^2}{\rho_{\rm E}|\mathbf{a}^H(f_{\rm c},\! \boldsymbol{\Delta}_f,\!\theta_{\rm B},\!r_{\rm B})\mathbf{a}(f_{\rm c},\! \boldsymbol{\Delta}_f,\!\theta_{\rm E},\!r_{\rm E})|^2 \!\!+\!\! \sigma_0^2}\Big)\!\!\bigg]^{\!\!+}\label{eq:rewritten achievable secrecy rate} \\
	&~~~~~~~~~~\text {s.t.}~\eqref{subeq:carrier constraint}, \eqref{subeq:frequency constraint}, \eqref{subeq:Filtering constraint},\notag
	\end{align}
\end{subequations}
where $\rho_{\rm B} = P_{\rm \max}N\gamma_{\rm c}^2|g_{{\rm B},f_0}|^2$ and $\rho_{\rm E} = P_{\rm \max}N\gamma_{\rm c}^2|g_{{\rm E},f_0}|^2$ denote the effective received signal power at Bob and Eve, respectively.
It is observed from \eqref{eq:rewritten achievable secrecy rate} that the achievable secrecy rate of Bob is jointly determined by the channel correlation between Bob and Eve (i.e., {\small$G(f_{\rm c}, \Delta_{f})\triangleq|\mathbf{a}^H(f_{\rm c}, \boldsymbol{\Delta}_f,\theta_{\rm B},r_{\rm B})\mathbf{a}(f_{\rm c}, \boldsymbol{\Delta}_f,\theta_{\rm E},r_{\rm E})|$}) and the attenuation factor term $\gamma_{\rm c}$.

Note that the channel correlation $G(f_{\rm c}, \Delta_{f})$ and channel-gain attenuation factor $\gamma_{\rm c}$ are intricately coupled, rendering the analysis highly non-trivial.
To address this issue, we first consider a \emph{secrecy-guaranteed} problem with \emph{null-steering constraint}. Then we further characterize the relationship between secrecy-guaranteed problem and the original (P1).
Specifically, the secrecy-guaranteed problem can be formulated as follows
\begin{subequations}
	\begin{align}
		({\rm P4}) &\max_{\{f_{\rm c}, \boldsymbol{\Delta}_f\}} ~ R_{{\rm FSA}}\notag \\
		&~~~\text {s.t.}~\!|\mathbf{a}^H(f_{\rm c}, \boldsymbol{\Delta}_f,\!\theta_{\rm B},\!r_{\rm B})\mathbf{a}(f_{\rm c}, \boldsymbol{\Delta}_f,\theta_{\rm E},r_{\rm E})| = 0, \label{eq: null-steering constraint}\\
		&~~~~~~~0 \le \Delta_f \le \Delta_{f_{\max}},\label{eq:offset constraint 1}\\	
		&~~~~~~~\Delta_f(\Delta_f - B) \ge 0,\label{eq:offset constraint 2}\\
		&~~~~~~~f_0 \le f_{\rm c} \le f_{\rm H},\label{eq:carrier constraint}
	\end{align}
\end{subequations}
where \eqref{eq: null-steering constraint} represents the null-steering constraint.
As such, when (P4) is feasible, the objective function $R_{{\rm FSA}}$ reduces to $R_{{\rm FSA}}^{\rm (null)} = \log_2\Big(1 + \frac{ P_{\rm \max}N\gamma_{\rm c}^2|g_{{\rm B},f_0}|^2}{\sigma_0^2}\Big)$, which monotonically decreases with the carrier frequency.
Thus, the {optimal} solution to (P4) is to {minimize} the carrier frequency $f_{\rm c}$, while achieving null-steering over Eve.

\begin{lemma}
	\label{lemma: relationship between (P3) and (P4)}
	\rm{If the optimal solution $\{\Delta_{f}^{\ast}, f_{\rm c}^{\ast}\}$ to (P4) satisfies $f_{\rm c}^{\ast} = f_0$, then it is also an optimal solution to (P3).}
\end{lemma}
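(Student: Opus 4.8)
The plan is to prove the claim by an \emph{upper-bound-meets-achievability} argument. First I would derive a secrecy-rate upper bound that holds \emph{uniformly} over the entire feasible set of (P1), and then verify that the candidate solution -- the optimal $\{f_{\rm c}^{\ast},\boldsymbol{\Delta_f}^{\ast}\}$ of (P4) paired with the MRT beamformer -- attains it. The crux is the elementary observation that the channel-gain attenuation factor $\gamma_{\rm c}=f_0/f_{\rm c}$ is maximized, and equals $1$, exactly at $f_{\rm c}=f_0$; hence a null-steering solution that requires no carrier switching cannot be beaten.

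For the upper bound, I would argue as follows. For any feasible $\{f_{\rm c},\boldsymbol{\Delta_f},\mathbf{w}_{\rm B}(t)\}$ of (P1), since $R_{\rm E}(t)\ge 0$ we have $R_{\rm S,FSA}\le R_{\rm B}(t)=\log_2(1+|\mathbf{h}^H_{\rm B}\mathbf{w}_{\rm B}(t)|^2/\sigma^2)$. Because every entry of $\mathbf{a}(f_{\rm c},\boldsymbol{\Delta_f},\theta,r)$ has modulus $1/\sqrt N$, we get $\|\mathbf{a}\|=1$ at all $t$ and hence $\|\mathbf{h}_{\rm B}\|^2=N\gamma_{\rm c}^2|g_{{\rm B},f_0}|^2$; combining Cauchy--Schwarz with the power constraint \eqref{subeq:power constraint} gives $|\mathbf{h}^H_{\rm B}\mathbf{w}_{\rm B}(t)|^2\le P_{\max}N\gamma_{\rm c}^2|g_{{\rm B},f_0}|^2$, and then $\gamma_{\rm c}\le 1$ on the range \eqref{subeq:carrier constraint} yields $R_{\rm S,FSA}\le\log_2(1+P_{\max}N|g_{{\rm B},f_0}|^2/\sigma^2)$ everywhere (the outer $[\cdot]^+$ is inactive since the right-hand side is nonnegative).

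For achievability, I would evaluate the objective at $f_{\rm c}=f_{\rm c}^{\ast}=f_0$, $\boldsymbol{\Delta_f}=\boldsymbol{\Delta_f}^{\ast}$ and the MRT choice $\mathbf{w}_{\rm B}^{\ast}(t)=\sqrt{P_{\max}}\,\mathbf{a}(f_0,\boldsymbol{\Delta_f}^{\ast},\theta_{\rm B},r_{\rm B})$. This tuple is feasible for (P1): \eqref{subeq:power constraint} is met with equality, and \eqref{subeq:carrier constraint}--\eqref{subeq:frequency constraint} hold because $\{f_{\rm c}^{\ast},\boldsymbol{\Delta_f}^{\ast}\}$ is feasible for (P4). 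With $\gamma_{\rm c}=1$ and $\mathbf{w}_{\rm B}^{\ast}(t)$ collinear with $\mathbf{h}_{\rm B}$, Cauchy--Schwarz is tight, so $|\mathbf{h}^H_{\rm B}\mathbf{w}_{\rm B}^{\ast}(t)|^2=P_{\max}N|g_{{\rm B},f_0}|^2$; and since $\mathbf{h}_{\rm E}$ is a scalar multiple of $\mathbf{a}(f_0,\boldsymbol{\Delta_f}^{\ast},\theta_{\rm E},r_{\rm E})$, the null-steering constraint \eqref{eq: null-steering constraint}, which holds at the feasible point $\{f_0,\boldsymbol{\Delta_f}^{\ast}\}$ of (P4), forces $\mathbf{h}^H_{\rm E}\mathbf{w}_{\rm B}^{\ast}(t)=0$, hence $R_{\rm E}(t)=0$. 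Thus $R_{\rm S,FSA}=\log_2(1+P_{\max}N|g_{{\rm B},f_0}|^2/\sigma^2)$, matching the uniform bound, so the tuple is optimal for (P1).

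I do not expect a genuine obstacle: the whole argument reduces to Cauchy--Schwarz together with the monotonicity $\gamma_{\rm c}\le1$. The one point worth stating carefully is the passage, in the achievability step, from the steering-vector orthogonality in \eqref{eq: null-steering constraint} to $\mathbf{h}^H_{\rm E}\mathbf{w}_{\rm B}^{\ast}=0$; this is immediate once one notes that under MRT the beamformer is proportional to $\mathbf{a}(\cdot,\theta_{\rm B},r_{\rm B})$ while $\mathbf{h}_{\rm E}$ is proportional to $\mathbf{a}(\cdot,\theta_{\rm E},r_{\rm E})$, and that $|\mathbf{a}^H\mathbf{b}|=|\mathbf{b}^H\mathbf{a}|$. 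The same bound also applies verbatim to (P3), and more generally if "(P1)" is read as the multi-Eve problem -- but in the latter case achievability would further require null-steering over every Eve, which is precisely why the lemma is placed in the single-Eve setting of this subsection.
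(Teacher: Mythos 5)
Your proof is correct and follows essentially the same route as the paper's: both arguments sandwich the optimum by showing that the null-steering solution at $f_{\rm c}=f_0$ attains $\log_2\big(1+P_{\max}N|g_{{\rm B},f_0}|^2/\sigma^2\big)$, which is simultaneously a uniform upper bound on the secrecy rate because $\gamma_{\rm c}\le 1$ and the eavesdropping term can only reduce the objective. The only difference is minor: your added Cauchy--Schwarz step makes the bound cover arbitrary beamformers, so you prove the claim for (P1) as literally stated, whereas the paper's proof compares the optimal values of (P3) and (P4) with the MRT beamformer already fixed.
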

\begin{proof}
	Let $R_{{\rm FSA}}^{\dagger}$ and $R_{{\rm FSA}}^{\ast}$ denote the optimal values for (P3) and (P4), respectively.
	Since (P3) is a relaxed problem of (P4), we have $R_{{\rm FSA}}^{\dagger} \ge R_{{\rm FSA}}^{\ast}$.
	In addition, the optimal value for (P4) with $f_{\rm c}^{\ast} = f_0$ is given by
	\begin{equation}
	R_{{\rm FSA}}^{\ast}  = \log_2\Big(1 + {P_{\rm B}}/{\sigma_0^2}\Big) \ge R_{{\rm FSA}}^{\dagger},
	\label{eq:optimal achievable rate}
	\end{equation} 
	where $P_{\rm B} = P_{\rm \max}N|g_{{\rm B},f_0}|^2$.
	Thus, we have $R_{{\rm FSA}}^{\dagger} = R_{{\rm FSA}}^{\ast}$ and the proof is completed.
\end{proof}

{\bf Lemma \ref{lemma: relationship between (P3) and (P4)}} indicates that when only imposing frequency offsets to achieve null-steering over Eve, we can obtain the maximum  secrecy rate for Problem (P3).
However, when the optimal carrier frequency for (P4) satisfies $f_{\rm c}^{\ast} > f_0$, it is a \emph{suboptimal} solution to (P3) in general.

To meet the null-steering constraint \eqref{eq: null-steering constraint} in (P4), a closed-form expression for the Eve-and-Bob channel correlation is presented below.
\begin{lemma}[Channel correlation]
	\label{lemma:Closed-form channel correlation}
	\rm{Given the locations of Bob and Eve $(\theta_{\rm B}, r_{\rm B})$ and $(\theta_{\rm E}, r_{\rm E})$, the channel correlation between Bob and Eve can be obtained in the following form,}
	\begin{equation}
	G(f_{\rm c}, \Delta_{f}) = \bigg|\frac{\sin\big({N\pi}(\frac{1}{2}\frac{f_{\rm c}}{f_0}\Delta_{\theta} - \frac{\Delta_f\Delta r}{c})\big)}{N\sin\big(\frac{\pi}{2}\frac{f_{\rm c}}{f_0}\Delta_{\theta} - \frac{\pi\Delta_f\Delta r}{c}\big)}\bigg|,
	\label{eq:closed-form channel correlation}
	\end{equation}
	\rm{where $\Delta_{r} \triangleq r_{\rm B}-r_{\rm E} $ denotes the angle difference between Bob and Eve.} 
\end{lemma}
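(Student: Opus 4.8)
The plan is to compute the Hermitian inner product $\mathbf{a}^H(f_{\rm c},\boldsymbol{\Delta_f},\theta_{\rm B},r_{\rm B})\,\mathbf{a}(f_{\rm c},\boldsymbol{\Delta_f},\theta_{\rm E},r_{\rm E})$ in closed form starting from the entrywise expression in \eqref{eq:steering vector}, specialized to the uniform increment $\Delta_{f_n}=n\Delta_f$ in force throughout this subsection. Multiplying the conjugated $n$-th entry of the Bob steering vector by the $n$-th entry of the Eve steering vector, the $1/\sqrt N$ factors combine to $1/N$ and the time-varying phases $e^{\jmath 2\pi\Delta_{f_n}t}$ cancel, since the FIV is common to both users. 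This yields
\[
G(f_{\rm c},\Delta_f)=\frac{1}{N}\Bigl|\sum_{n=1}^{N}\exp\!\Bigl(-\jmath 2\pi\bigl[\tfrac{f_{\rm c}\delta_n d_0}{c}(\sin\theta_{\rm B}-\sin\theta_{\rm E})-\tfrac{n\Delta_f}{c}(r_{\rm B}-r_{\rm E})\bigr]\Bigr)\Bigr|.
\]

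Next I would plug in $d_0=\lambda_0/2=c/(2f_0)$, so that $f_{\rm c}d_0/c=f_{\rm c}/(2f_0)$, and $\delta_n=n-\tfrac{N+1}{2}$. Invoking the narrow main-lobe regime assumed here ($|\theta_{\rm B}-\theta_{\rm E}|<2/N$), I approximate $\sin\theta_{\rm B}-\sin\theta_{\rm E}\approx\Delta_\theta$, and write $r_{\rm B}-r_{\rm E}=\Delta r$. The exponent then separates into a part linear in $n$, namely $-\jmath 2\pi n\bigl(\tfrac{f_{\rm c}}{2f_0}\Delta_\theta-\tfrac{\Delta_f\Delta r}{c}\bigr)$, and an $n$-independent phase $e^{+\jmath\pi(N+1)f_{\rm c}\Delta_\theta/(2f_0)}$ produced by the $-(N+1)/2$ centering of $\delta_n$; the latter has unit modulus and thus drops out of $G$.

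Setting $\psi\triangleq\tfrac{\pi}{2}\tfrac{f_{\rm c}}{f_0}\Delta_\theta-\tfrac{\pi\Delta_f\Delta r}{c}$, what remains is $\tfrac1N\bigl|\sum_{n=1}^{N}e^{-\jmath 2n\psi}\bigr|$. Summing the geometric series and using $|1-e^{-\jmath 2x}|=2|\sin x|$ on numerator and denominator collapses the ratio to the Dirichlet kernel $|\sin(N\psi)/\sin\psi|$; multiplying by $1/N$ and noting $N\psi=N\pi\bigl(\tfrac12\tfrac{f_{\rm c}}{f_0}\Delta_\theta-\tfrac{\Delta_f\Delta r}{c}\bigr)$ reproduces \eqref{eq:closed-form channel correlation} exactly.

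The computation is essentially bookkeeping, so there is no deep obstacle. The only points requiring care are (i) correctly identifying which phase factors are independent of $n$ — in particular the array-centering term arising from $\delta_n=n-(N+1)/2$ — and hence vanish under $|\cdot|$; and (ii) being explicit that $\sin\theta_{\rm B}-\sin\theta_{\rm E}\approx\Delta_\theta$ is the \emph{sole} approximation invoked, legitimate because Eve lies inside the main lobe so the angular separation is $O(1/N)$, while every other step is exact.
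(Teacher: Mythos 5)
Your proposal is correct and follows essentially the same route as the paper: write the Hermitian inner product of the two steering vectors as a single geometric sum, sum it, and use $|1-e^{-\jmath 2x}|=2|\sin x|$ (Euler's formula) to obtain the Dirichlet-kernel form. The only difference is presentational — you make explicit the centering phase from $\delta_n=n-\tfrac{N+1}{2}$ and the approximation $\sin\theta_{\rm B}-\sin\theta_{\rm E}\approx\Delta_\theta$, which the paper's proof uses silently by writing $\Delta_\theta$ directly in the exponent.
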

\begin{proof}
	For the channel correlation between Bob and Eve, we have
	{\small\begin{equation}
		\begin{aligned}
		G(f_{\rm c}, &\Delta_{f})=\frac{1}{N} \Big|\sum_{n=1}^{N} e^{\jmath\frac{2\pi}{c}f_{\rm c}\delta_n d_0\Delta_{\theta}-\jmath\frac{2\pi}{c}n\Delta_f\Delta_{r}}\Big|\\
		&=\Big|\frac{e^{\jmath\frac{2\pi}{c}(f_{\rm c}d_0\Delta_{\theta}-\Delta_f\Delta_{r})}(1-e^{\jmath\frac{2\pi}{c}N(f_{\rm c}d_0\Delta_{\theta}-\Delta_f\Delta_{r})})}{N(1-e^{\jmath\frac{2\pi}{c}(f_{\rm c}d_0\Delta_{\theta}-\Delta_f\Delta_{r})})}\Big|\\
		&=\Big|\frac{e^{-\jmath\frac{\pi}{c}N(f_{\rm c}d_0\Delta_{\theta}-\Delta_f\Delta_{r})}-e^{\jmath\frac{\pi}{c}N(f_{\rm c}d_0\Delta_{\theta}-\Delta_f\Delta_{r})}}{N(e^{-\jmath\frac{\pi}{c}{f_{\rm c}d_0\Delta_{\theta}-\Delta_f \Delta_{r}}}-e^{\jmath\frac{\pi}{c}(f_{\rm c}d_0\Delta_{\theta}-\Delta_f \Delta_{r})})}\Big|.	
		\end{aligned}
		\label{eq:semi closed form}
		\end{equation}}By applying the Euler's formula to \eqref{eq:semi closed form}, we can directly arrive at \eqref{eq:closed-form channel correlation}, thus completing the proof.
\end{proof}

Based on {\bf Lemma \ref{lemma:Closed-form channel correlation}}, we can achieve null steering at Eve by setting ${N\pi}(\frac{1}{2}\frac{f_{\rm c}}{f_0}\Delta_{\theta} - \frac{\Delta_f\Delta r}{c}) = k\pi$, with $k \in \{\mathbb{Z}~\cap~ \mod(k,N) \neq 0\}$, which is simplified as
\begin{equation}
	\frac{f_{\rm c}}{f_0}\Delta_{\theta} = \frac{2k}{N}+\frac{2\Delta_{r}\Delta_f}{c}.
	\label{eq:null-point-line}
\end{equation}

According to \eqref{eq:null-point-line}, the optimal solution to (P4) can be obtained for the following three cases: 
1) $\Delta_{\theta} = 0, \Delta_{r} \neq 0$, i.e., Eve and Bob are located at the same angle but different ranges; 2) $\Delta_{r} = 0, \Delta_{\theta} \neq 0$, i.e., Eve and Bob are located at the same range but different angles; and 3) $\Delta_{\theta} \neq 0, \Delta_{r} \neq 0$.
In the following subsections, we analyze the optimal design of the carrier frequency and frequency offsets for the proposed FSA in different cases, respectively.

In addition, for comparison, we define the performance gap between the proposed FSA and benchmark schemes as follows.
\begin{definition}[Performance gap\footnote{Compared with MA and FPA, the proposed FSA generally needs a wider system bandwidth due to the operation of frequency hopping. For ease of comparison, we assume the same signal bandwidth for the three antenna configurations such that $ \sigma_{{\rm MA}}^2 = \sigma_{{\rm FPA}}^2 = \sigma_0^2 = BN_0$, where $N_0$ denotes the noise power spectral density, leading to the respective achievable rates in \eqref{eq:MA rate} and \eqref{eq:FPA rate}.}]
	\emph{The achievable secrecy rates of MAs and FPAs are given by}
	\begin{equation}
		R_{{\rm MA}} \!=\! \bigg[\log_2\Big(\frac{P_{\rm B} + \sigma_0^2}{P_{\rm E}|\mathbf{a}^H_{\rm MA}(\mathbf{x}, \theta_{\rm B})\mathbf{a}_{\rm MA}(\mathbf{x}, \theta_{\rm E})|^2 + \sigma_0^2}\Big)\!\bigg]^+\!\!,
		\label{eq:MA rate}
	\end{equation}
	\begin{equation}
		R_{{\rm FPA}} \!=\! \bigg[\log_2\Big(\frac{P_{\rm B} + \sigma_0^2}{P_{\rm E}|\mathbf{a}^H_{\rm FPA}(\theta_{\rm B})\mathbf{a}_{\rm FPA}(\theta_{\rm E})|^2 +  \sigma_0^2}\Big)\!\bigg]^+,
		\label{eq:FPA rate}
	\end{equation}
	\emph{where $P_{\rm E} = P_{\rm \max}N|g_{{\rm E},f_0}|^2$ denotes the received signal power at Eve under the basic frequency $f_0$ and $\mathbf{a}_{\rm MA}(\mathbf{x}, \theta) = \frac{1}{\sqrt{N}}\big[ e^{\jmath{2\pi}\frac{f_{\rm c}[\mathbf{x}]_1\sin\theta}{c}}, \cdots, e^{\jmath{2\pi}\frac{f_{\rm c}[\mathbf{x}]_N\sin\theta}{c}}\big]^T$ is the array response vector of MA with $\mathbf{x}$ representing the antenna position vector.
	In addition, $\mathbf{a}_{\rm FPA}(\theta) = \frac{1}{\sqrt{N}}\big[ e^{\jmath{2\pi}\frac{f_{\rm c}\delta_1d_0\sin\theta}{c}}, \cdots, e^{\jmath{2\pi}\frac{f_{\rm c}\delta_Nd_0\sin\theta}{c}}\big]^T$ denotes the array response vector of FPA.
	As such, the performance gaps of FSA over the two benchmark arrays are given by
	\begin{equation}
		\Delta R_{{\rm MA}}  =  R_{{\rm S}, {\rm FSA}} - R_{{\rm S}, {\rm MA}},~\Delta R_{{\rm FPA}} = R_{{\rm S}, {\rm FSA}} - R_{{\rm S}, {\rm FPA}}.
		\notag
	\end{equation}}
\end{definition}

\vspace{-10pt}
\subsection{First Case: $\Delta_{\theta} = 0, \Delta_{r} \neq 0$}
\label{sec:case 1}
For this case, the null-steering condition in \eqref{eq:null-point-line} reduces to
\begin{equation}
	{\Delta_f}= -\frac{kc}{N\Delta_{r}}.
	\label{eq:null-point first case}
\end{equation}

Typically, we have ${\Delta_f} \ge \frac{kc}{N\Delta_{r}} > B$.
For example, when Alice is equipped with $N = 13$ antennas and the range difference between Bob and Eve is $\Delta_{r} \in [10,100]$~m, the minimal frequency offset in \eqref{eq:null-point first case} can be obtained as $\min_k -\frac{kc}{N\Delta_{r}} = \frac{c}{N|\Delta_{r}|} \in [0.23, 2.3]$ MHz.
Considering that the bandwidth of a narrow-band subcarrier signal in the 5G frequency range 1 (FR1) is 15--120 kHz, we have ${\Delta_f} \ge \frac{kc}{N\Delta_{r}} > B$ and thus the constraint \eqref{eq:offset constraint 2} can be omitted in the analysis of this section.

Moreover, it is observed that switching the carrier frequency $f_{\rm c}$ cannot control the null steering of the MRT beam, when Bob and Eve are located at the same angle (i.e., $\Delta_{\theta} = 0$).
This can be intuitively understood, as switching the carrier frequency virtually creates a larger-aperture sparse array, leading to a smaller beam-width in the angular domain, while it offers limited control over beam properties in the range domain.
Fortunately, applying small uniform frequency offsets across the antennas (i.e., the so-called FDA) can form range-dependent beams, hence enabling null steering in the range domain.
As such, we obtain the optimal frequency offset $\Delta_f$ and the carrier frequency $f_{\rm c}$ as follows.

\begin{figure}[t]
	\centering
	\vspace{-14pt}
	\includegraphics[width=0.365\textwidth]{./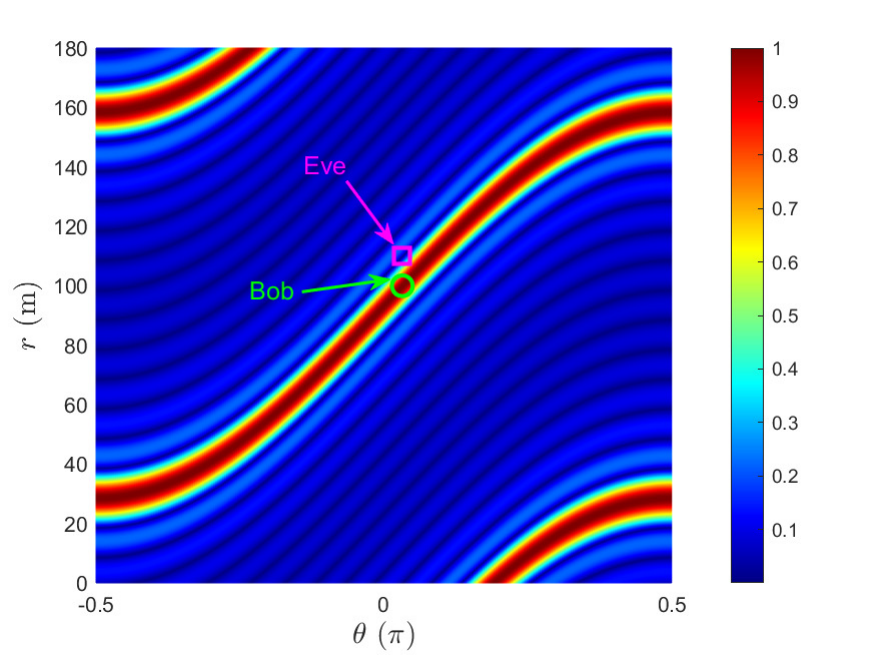}
	\caption{The channel correlation in the range domain when $\Delta_{\theta}$ = 0. The system parameters are given by $\theta_{\rm B} = \theta_{\rm E} = 0.1$, $r_{\rm B} = 100$ m and $r_{\rm E} = 110$ m (i.e., $\Delta_{r} = 10$ m).
		In addition, Alice is equipped with $N=13$ antennas and operates at $f_{\rm c} = f_0 = 3.5$ GHz and the frequency offset is $\Delta_f = \frac{c}{N\Delta_{r}} = 2.3$ MHz.} 
	\label{Fig:NullSreering Range}
	\vspace{-12pt}
\end{figure}
\begin{figure}[t]
	\vspace{-5pt}
	\centering
	\includegraphics[width=0.3\textwidth]{./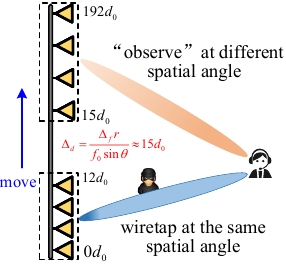}
	\caption{Physical interpretation of Case 1.} 
	\label{Fig:Physical interpretation of Case 1}
	\vspace{-10pt}
\end{figure}
\begin{proposition}[Optimal solution to (P4) for Case 1]
	\label{proposition:Optimal solution to (P4) for case 1}
	\rm{When $\Delta_{\theta} = 0, \Delta_{r} \neq 0$, Problem (P4) is feasible if and only if $\Delta_{f_{\max}} \ge \frac{c}{N|\Delta_{r}|}$ and the optimal solution is}
	\begin{equation}
	\Delta_f^{\ast} =
	\frac{c}{N|\Delta_{r}|},~ f_{\rm c}^{\ast} = f_0.
	\end{equation}
\end{proposition}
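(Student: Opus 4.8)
The plan is to convert the null-steering constraint in (P4) into the explicit scalar condition \eqref{eq:null-point first case} and then optimise over the resulting feasible set. First I would specialise Lemma~\ref{lemma:Closed-form channel correlation} to $\Delta_{\theta}=0$: the closed form \eqref{eq:closed-form channel correlation} collapses to the Dirichlet-kernel expression $G(f_{\rm c},\Delta_f)=|\sin(N\pi\Delta_f\Delta_r/c)|/(N|\sin(\pi\Delta_f\Delta_r/c)|)$, which depends only on the scalar $\Delta_f\Delta_r/c$ and, crucially, not on $f_{\rm c}$ (consistent with the earlier remark that carrier switching cannot steer nulls when Bob and Eve share the same angle). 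Hence the null-steering constraint \eqref{eq: null-steering constraint} holds exactly when the numerator vanishes while the denominator does not, i.e. $\Delta_f\Delta_r/c=k/N$ for some integer $k$ not divisible by $N$ --- precisely \eqref{eq:null-point first case}. The delicate point here is the denominator check: multiples of $N$ must be excluded because there the numerator and denominator vanish together and $G\to 1\neq 0$ (likewise $\Delta_f=0$ trivially gives $G=1$).

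Next I would impose $\Delta_f\ge 0$ together with the frequency-switching constraint \eqref{subeq:frequency constraint}. Since $\Delta_f=-kc/(N\Delta_r)\ge 0$ forces $k$ to have the sign opposite to $\Delta_r$, the admissible null-steering increments form the set $\{\,|k|\,c/(N|\Delta_r|):|k|\in\mathbb{Z},\ |k|\ge 1,\ N\nmid|k|\,\}$, whose smallest element is $c/(N|\Delta_r|)$, attained at $|k|=1$. The per-antenna frequency budget then requires $\Delta_f\le\Delta_{f_{\max}}$, so (P4) has a feasible point if and only if the smallest admissible increment fits that budget, i.e. $\Delta_{f_{\max}}\ge c/(N|\Delta_r|)$: when $\Delta_{f_{\max}}<c/(N|\Delta_r|)$ no positive admissible $\Delta_f$ exists and $\Delta_f=0$ yields $G=1\neq 0$, whereas when $\Delta_{f_{\max}}\ge c/(N|\Delta_r|)$ the pair $(\Delta_f,f_{\rm c})=(c/(N|\Delta_r|),f_0)$ satisfies \eqref{subeq:carrier constraint}, \eqref{subeq:frequency constraint} and \eqref{eq: null-steering constraint}. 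This establishes the feasibility claim.

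Finally, on the feasible set the objective of (P4) equals $R_{{\rm S, FSA}}^{\rm (null)}=\log_2(1+\rho_{\rm B}/\sigma^2)$ with $\rho_{\rm B}=P_{\max}N\gamma_{\rm c}^2|g_{{\rm B},f_0}|^2$ and $\gamma_{\rm c}=f_0/f_{\rm c}$, which --- as already observed after (P4) --- is strictly decreasing in $f_{\rm c}$ and independent of which admissible $\Delta_f$ is used. Hence every optimiser must take $f_{\rm c}^{\ast}=f_0$, while any admissible null-steering increment is optimal; reporting the minimum-perturbation choice $\Delta_f^{\ast}=c/(N|\Delta_r|)$ gives the stated solution. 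The only genuine obstacle in the whole argument is getting the zero-set characterisation of the Dirichlet kernel right --- the $N\nmid k$ exclusion of grating-lobe values and the sign bookkeeping needed to keep $\Delta_f\ge 0$; once that is in place, feasibility and optimality follow immediately from Lemma~\ref{lemma:Closed-form channel correlation}, equation \eqref{eq:null-point-line}, and the monotonicity of $R_{{\rm S, FSA}}^{\rm (null)}$ in $f_{\rm c}$.
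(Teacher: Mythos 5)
Your proposal is correct and follows exactly the route the paper intends but omits ("easily obtained from \eqref{eq:null-point first case}"): specialize Lemma~\ref{lemma:Closed-form channel correlation} to $\Delta_\theta=0$, characterize the Dirichlet-kernel zeros $\Delta_f=|k|c/(N|\Delta_r|)$ with $N\nmid k$, intersect with $0\le\Delta_f\le\Delta_{f_{\max}}$ to get the feasibility threshold at $|k|=1$, and use the monotonic decrease of $R_{\rm S,FSA}^{\rm (null)}$ in $f_{\rm c}$ (with $G$ independent of $f_{\rm c}$) to fix $f_{\rm c}^\ast=f_0$. Your explicit exclusion of $N\mid k$ and the sign bookkeeping are details the paper leaves implicit, and your reading of the frequency budget as $\Delta_f\le\Delta_{f_{\max}}$ matches the convention the paper itself uses in the proof of Corollary~\ref{lemma:Optimal solution to (P3)}.
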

\begin{proof}
	The result can be easily obtained from \eqref{eq:null-point first case} and we omit the details for brevity.
\end{proof}
\begin{corollary}[Optimal solution to (P3) for Case 1]
	\label{lemma:Optimal solution to (P3)}
	\rm{When $\Delta_{\theta} = 0, \Delta_{r} \neq 0$, the optimal solution to (P3) is given by}
	\begin{equation}
		\Delta_f^{\ast} = \min\Big\{ \Delta_{f_{\max}},\frac{c}{N|\Delta_{r}|} \Big\},~ f_{\rm c}^{\ast} = f_0.
	\end{equation}
\end{corollary}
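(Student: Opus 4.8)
The plan is to exploit the fact that in case~1 the channel correlation decouples from the carrier frequency. Substituting $\Delta_{\theta}=0$ into \eqref{eq:closed-form channel correlation} gives $G(f_{\rm c},\Delta_f)=|\sin(N\pi\Delta_f\Delta_{r}/c)|/(N|\sin(\pi\Delta_f\Delta_{r}/c)|)$, which depends on $\Delta_f$ only. Writing the objective of (P3) as $[\log_2((\rho_{\rm B}+\sigma^2)/(\rho_{\rm E}G^2+\sigma^2))]^{+}$ with $\rho_{\rm B}=P_{\rm B}\gamma_{\rm c}^2$ and $\rho_{\rm E}=P_{\rm E}\gamma_{\rm c}^2$, one sees that for any fixed $f_{\rm c}$ it is strictly decreasing in $G$ (numerator independent of $G$, denominator strictly increasing in $G$). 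Hence the problem decouples: first I would pick $\Delta_f$ to \emph{minimize} $G$ over $[0,\Delta_{f_{\max}}]$, and then, with $G$ pinned at that minimum, optimize $f_{\rm c}$.

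For the $\Delta_f$-step I would reuse the feasibility threshold from the preceding Proposition. If $\Delta_{f_{\max}}\ge c/(N|\Delta_{r}|)$, then $\Delta_f=c/(N|\Delta_{r}|)$ is feasible and satisfies \eqref{eq:null-point first case} with $k=\pm1$ (here $N\ge3$, so $\mathrm{mod}(1,N)\neq0$), driving $G$ to its global minimum $0$; this is the first null, so $\Delta_f^{\ast}=c/(N|\Delta_{r}|)$. If instead $\Delta_{f_{\max}}<c/(N|\Delta_{r}|)$, then for every feasible $\Delta_f$ the quantity $x\triangleq\pi\Delta_f|\Delta_{r}|/c$ lies in $[0,\pi/N)$, on which $G$ coincides with the main lobe of the normalized Dirichlet kernel $\sin(Nx)/(N\sin x)$; since this is strictly decreasing on $[0,\pi/N]$, $G$ is minimized at the boundary, i.e.\ $\Delta_f^{\ast}=\Delta_{f_{\max}}$. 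Combining the two cases gives $\Delta_f^{\ast}=\min\{c/(N|\Delta_{r}|),\Delta_{f_{\max}}\}$.

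For the $f_{\rm c}$-step, with $G$ fixed at $G^{\ast}$, the quantity inside the logarithm is $(P_{\rm B}\gamma_{\rm c}^2+\sigma^2)/(P_{\rm E}(G^{\ast})^2\gamma_{\rm c}^2+\sigma^2)$; its derivative in $\gamma_{\rm c}^2$ has the sign of $\sigma^2(P_{\rm B}-P_{\rm E}(G^{\ast})^2)$. Whenever the secrecy rate is positive, $P_{\rm B}>P_{\rm E}(G^{\ast})^2$, so the objective is increasing in $\gamma_{\rm c}=f_0/f_{\rm c}$, hence decreasing in $f_{\rm c}$, which forces $f_{\rm c}^{\ast}=f_0$; and if $P_{\rm B}\le P_{\rm E}(G^{\ast})^2$ then $R_{\rm S,FSA}=0$ for all $f_{\rm c}$ and $f_{\rm c}^{\ast}=f_0$ is optimal too. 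This is the same monotonicity used in the remark that precedes Lemma~\ref{lemma: relationship between (P3) and (P4)}.

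The only non-routine ingredient, and the point I expect to be the main obstacle to state cleanly, is the strict monotonicity of $\sin(Nx)/(N\sin x)$ on $[0,\pi/N]$. It follows from a short calculus argument: the derivative has the sign of $N\cos(Nx)\sin x-\sin(Nx)\cos x$, which is clearly negative on $(0,\pi/N)$ when $\cos(Nx)\le0$, and when $\cos(Nx)>0$ reduces to $N\tan x<\tan(Nx)$, a consequence of the convexity of $\tan$ on $[0,\pi/2)$ with $\tan 0=0$ (so $\tan x=\tan(\tfrac{1}{N}\cdot Nx)\le\tfrac{1}{N}\tan(Nx)$, strictly for $x>0$). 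Alternatively this may simply be cited as a standard property of the Dirichlet kernel; all remaining steps are elementary one-variable monotonicity.
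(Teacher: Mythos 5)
Your proof is correct and follows essentially the same route as the paper: exploit that $G$ is independent of $f_{\rm c}$ when $\Delta_\theta=0$ (so $f_{\rm c}^\ast=f_0$), hit the first null $\Delta_f=\tfrac{c}{N|\Delta_r|}$ when it is feasible, and otherwise push $\Delta_f$ to $\Delta_{f_{\max}}$ by monotonicity of the correlation on the main lobe. The only difference is that you explicitly verify two facts the paper asserts without proof (the strict decrease of the Dirichlet-kernel main lobe on $[0,\pi/N]$ and the sign condition making the rate decreasing in $f_{\rm c}$), which is a welcome but not structurally different addition.
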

\begin{proof}
	When $\Delta_{\theta} = 0$, the channel correlation reduces to
	\begin{equation}
	G(f_{\rm c}, \Delta_{f}) = \bigg|\frac{\sin\big({N\pi}\frac{\Delta_f\Delta r}{c}\big)}{N\sin\big(\frac{\pi\Delta_f\Delta r}{c}\big)}\bigg| \triangleq \widetilde{G}(\Delta_{f}),
	\label{eq:reduced closed-form channel correlation}
	\end{equation}
	which is independent of the carrier frequency $f_{\rm c}$.
	Hence, the optimal $f_{\rm c}$ is given by $f_{\rm c}^{\ast} = f_0$.
	Then, the achievable secrecy rate $R_{\rm FSA}$ in \eqref{eq:rewritten achievable secrecy rate} monotonically decreases with the channel correlation $\widetilde{G}(\Delta_{f})$.
	Based on Lemma \ref{lemma:Optimal solution to (P3)}, if $\Delta_{f_{\max}} \ge \frac{kc}{N|\Delta_{r}|}$, $\Delta_f = \frac{kc}{N|\Delta_{r}|}$ is an optimal solution to (P4), which is also the optimal solution to (P3) according to Lemma \ref{lemma: relationship between (P3) and (P4)}.
	
	If $\Delta_{f_{\max}} < \frac{c}{N|\Delta_{r}|}$, we have $-\frac{1}{N}<\frac{\Delta_f\Delta r}{c} < \frac{1}{N}$.
	As such, the channel correlation $\widetilde{G}(\Delta_{f})$ monotonically decreases with $\Delta_f$.
	Hence, the optimal frequency offset can be selected as $\Delta_f^{\ast} = \Delta_{f_{\max}}$ to minimize $\widetilde{G}(\Delta_{f})$, which arrives at the optimal solution to (P4), thus completing the proof.
\end{proof}

\begin{remark}[Physical interpretation of {\bf Proposition \ref{proposition:Optimal solution to (P4) for case 1}}]
	\emph{{\bf Proposition \ref{proposition:Optimal solution to (P4) for case 1}} indicates that even when Bob and Eve share the same angle, the confidential signal can bypass Eve by imposing frequency increments on individual antennas as illustrated in Fig. \ref{Fig:NullSreering Range},  which can be intuitively understood as follows.
	Taking the parameters in Fig. \ref{Fig:NullSreering Range} as an example, when $\Delta_{f} = 2.3$ MHz, the position of the $n$-th antenna is equivalent to being shifted upwards by $\frac{\Delta_{f_n}r}{f_0\sin\theta} = \frac{n\Delta_{f}r}{f_0\sin\theta} \approx 15d_0$ (see Section \ref{sec:Physical Interpretation: A Movable Antenna Perspective}), as illustrated in Fig. \ref{Fig:Physical interpretation of Case 1}.
	As such, the FSA is effectively shifted by a considerable distance, which makes the observed angles from Bob and Eve become different.
	As a result, the beam steered towards Bob can effectively “bypass” Eve.
}
\end{remark}

However, in far-field communications, if Bob and Eve are located at the same angle, neither FPAs nor MAs (assuming the movable region is insufficiently large to make Eves located within the near-field region) can effectively reduce the channel correlation between Bob and Eve, i.e., $|\mathbf{a}^H(\theta_{\rm B},r_{\rm B})\mathbf{a}(\theta_{\rm E},r_{\rm E})| = 1$. As such, the achievable secrecy rates for FPAs and MAs are
\begin{equation}
\begin{aligned}
R_{{\rm S}, {\rm MA}}^{(1)} = R_{{\rm S}, {\rm FPA}}^{(1)} = \bigg[\log_2\Big(\frac{P_{\rm B} + \sigma_0^2}{P_{\rm E}+ \sigma_0^2}\Big)\bigg]^+.
\end{aligned}
\label{eq:performance fixed array}
\end{equation}

\begin{corollary}[FSAs versus MAs for Case 1]
	\label{corollary:case 1 copared with FPA}
	\rm{For Case 1, the secrecy-rate performance \emph{gain} of the proposed FSA over the FPA and MA is given by
	\begin{equation}
		\begin{aligned}
			\Delta &R_{{\rm MA}}^{(1)} = \Delta R_{{\rm FPA}}^{(1)} =  R_{{\rm S}, {\rm FSA}}^{\ast} - R_{{\rm S}, {\rm MA}}^{(1)}\\
			&~~~~~~= \min\Big\{\log_2\big(1 + \frac{P_{\rm E}}{\sigma_0^2}\big),	
			\log_2\big(1 + \frac{P_{\rm B}}{\sigma_0^2}\big)\Big\} > 0.
		\end{aligned}
		\label{eq:performance gain}
	\end{equation}}
\end{corollary}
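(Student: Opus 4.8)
The plan is to combine the characterization of the optimal FSA parameters for case~1 with the closed‑form benchmark secrecy rates in \eqref{eq:performance fixed array}, and then reduce the resulting difference by a short case analysis on the $[\cdot]^+$ operator. First I would invoke Corollary~\ref{lemma:Optimal solution to (P3)} (together with the feasibility condition $\Delta_{f_{\max}} \ge \frac{c}{N|\Delta_r|}$ of the preceding Proposition, so that null‑steering is actually attainable): for $\Delta_\theta = 0,\ \Delta_r \neq 0$ the optimal FSA parameters are $f_{\rm c}^\ast = f_0$ and $\Delta_f^\ast = \frac{c}{N|\Delta_r|}$, which force $G(f_{\rm c}^\ast,\Delta_f^\ast) = 0$ and $\gamma_{\rm c} = f_0/f_{\rm c}^\ast = 1$. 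Substituting into \eqref{eq:rewritten achievable secrecy rate} and using $\rho_{\rm B}=P_{\max}N\gamma_{\rm c}^2|g_{{\rm B},f_0}|^2 = P_{\rm B}$ yields $R_{{\rm S},{\rm FSA}}^\ast = \big[\log_2(1 + P_{\rm B}/\sigma^2)\big]^+ = \log_2(1 + P_{\rm B}/\sigma^2)$.

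Next I would bring in the benchmark rates. Since $\Delta_\theta = 0$ forces $|\mathbf{a}^H(\theta_{\rm B},r_{\rm B})\mathbf{a}(\theta_{\rm E},r_{\rm E})| = 1$ for both the FPA and MA response vectors, \eqref{eq:performance fixed array} gives $R_{{\rm S},{\rm MA}}^{(1)} = R_{{\rm S},{\rm FPA}}^{(1)} = \big[\log_2\big(\frac{P_{\rm B}+\sigma^2}{P_{\rm E}+\sigma^2}\big)\big]^+$; in particular the two gains coincide, so it suffices to evaluate $\Delta R_{{\rm S},{\rm MA}}^{(1)} = R_{{\rm S},{\rm FSA}}^\ast - R_{{\rm S},{\rm MA}}^{(1)}$. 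I would then split on the sign of the logarithm. When $P_{\rm B} \ge P_{\rm E}$, the $[\cdot]^+$ is inactive and the one‑line cancellation $\log_2(1+P_{\rm B}/\sigma^2) - \log_2\frac{P_{\rm B}+\sigma^2}{P_{\rm E}+\sigma^2} = \log_2\frac{P_{\rm E}+\sigma^2}{\sigma^2}$ gives $\Delta R_{{\rm S},{\rm MA}}^{(1)} = \log_2(1 + P_{\rm E}/\sigma^2)$. When $P_{\rm B} < P_{\rm E}$, the $[\cdot]^+$ clips $R_{{\rm S},{\rm MA}}^{(1)}$ to $0$, leaving $\Delta R_{{\rm S},{\rm MA}}^{(1)} = \log_2(1 + P_{\rm B}/\sigma^2)$.

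Finally, I would unify the two branches: since $x \mapsto \log_2(1+x/\sigma^2)$ is increasing, the branch $P_{\rm B}\ge P_{\rm E}$ returns $\min\{\log_2(1+P_{\rm E}/\sigma^2),\log_2(1+P_{\rm B}/\sigma^2)\}$ and likewise for $P_{\rm B}<P_{\rm E}$, so in both cases $\Delta R_{{\rm S},{\rm MA}}^{(1)} = \Delta R_{{\rm S},{\rm FPA}}^{(1)} = \min\{\log_2(1+P_{\rm E}/\sigma^2),\,\log_2(1+P_{\rm B}/\sigma^2)\}$, which is \eqref{eq:performance gain}; strict positivity is immediate from $P_{\rm B},P_{\rm E},\sigma^2 > 0$. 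I do not foresee a genuine difficulty here — the only points needing care are the bookkeeping of the $[\cdot]^+$ operator, making sure the case split aligns with the $\min$, and flagging that the clean identity presumes the null‑steering‑feasible regime $\Delta_{f_{\max}}\ge \frac{c}{N|\Delta_r|}$ (otherwise $R_{{\rm S},{\rm FSA}}^\ast$ must instead be read off from the residual correlation $\widetilde G(\Delta_{f_{\max}})$ in \eqref{eq:reduced closed-form channel correlation} and the closed form changes accordingly).
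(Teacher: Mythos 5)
Your proposal is correct and follows essentially the same route as the paper, whose proof is the one-line observation that combining the benchmark rate in \eqref{eq:performance fixed array} with the optimal FSA rate in \eqref{eq:optimal achievable rate} yields \eqref{eq:performance gain}; you simply spell out the case split on the $[\cdot]^+$ operator that produces the $\min\{\cdot,\cdot\}$ form. Your additional caveat about the feasibility regime $\Delta_{f_{\max}}\ge \frac{c}{N|\Delta_r|}$ is a reasonable clarification of an assumption the paper leaves implicit, not a different argument.
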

\begin{proof}
	Combining \eqref{eq:performance fixed array} and \eqref{eq:optimal achievable rate} directly leads to \eqref{eq:performance gain} and thus completing the proof.
\end{proof}

{\bf{Corollary}} \ref{corollary:case 1 copared with FPA} indicates that the proposed FSA always outperforms MA and FPA, when Bob and Eve are located at the same angle.
Moreover, it is observed from \eqref{eq:performance gain} that as the received SNRs at Eve and Bob increase, the secrecy rate gain achieved by the proposed FSA becomes more significant.

\vspace{-5pt}
\subsection{Second Case: $\Delta_{r} = 0, \Delta_{\theta} \neq 0$}
\label{sec:second case}
For this case, the null-steering condition of the array gain in \eqref{eq:null-point-line} reduce to ${f_{\rm c}} = \frac{2k}{N\Delta_{\theta}}{f_0}$.
It is observed that when Eve and Bob are located at the same range, adjusting the frequency offsets of all antennas cannot effectively realize null-steering over Eve, which, however, can be achieved by switching the carrier frequency.
The optimal solution to (P4) for Case 2 is presented below.
\begin{proposition}[Optimal solution to (P4) for Case 2]
	\label{lemma:Optimal solution to (P4) for case 2}
	\rm{When $\Delta_{\theta} \neq 0, \Delta_{r} = 0$, Problem (P4) is feasible if and only if ${f_{\rm H}} \ge \frac{2}{N|\Delta_{\theta}|}{f_0}$. In this case, its optimal solution and the corresponding objective value are respectively given by}
	\begin{equation}
		\Delta_f^{\ast} = 0,~ f_{\rm c}^{\ast} = \frac{2}{N|\Delta_{\theta}|}{f_0}
	\end{equation} 
	\begin{equation}
		R_{{\rm FSA}}^{\ast}  = \log_2\Big(1 + \frac{P_{\rm B}N^2|\Delta_{\theta}|^2}{4\sigma_0^2}\Big).
		\label{eq:optimal value fpr case 2}
	\end{equation}
\end{proposition}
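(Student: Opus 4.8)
The plan is to leverage the closed-form null-steering characterization \eqref{eq:null-point-line} obtained from Lemma~\ref{lemma:Closed-form channel correlation}. Setting $\Delta_{r} = 0$ there, the null-steering constraint \eqref{eq: null-steering constraint} becomes $\frac{f_{\rm c}}{f_0}\Delta_{\theta} = \frac{2k}{N}$ for some integer $k$ with $\mathrm{mod}(k,N)\neq 0$, i.e. $f_{\rm c} = \frac{2k}{N\Delta_{\theta}}f_0$. Crucially, the frequency increment $\Delta_f$ has dropped out of this condition (the term $\frac{\Delta_f\Delta_{r}}{c}$ vanishes), so $\boldsymbol{\Delta_f}$ plays no role in feasibility; moreover, once null-steering holds the objective reduces to $R_{\rm S, FSA}^{\rm (null)} = \log_2(1+\rho_{\rm B}/\sigma^2)$, which depends on $f_{\rm c}$ only through $\gamma_{\rm c} = f_0/f_{\rm c}$ and not on $\boldsymbol{\Delta_f}$ at all. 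Hence I would fix $\Delta_f^{\ast} = 0$ (feasible under \eqref{subeq:frequency constraint}) without loss of optimality, reducing (P4) to a one-dimensional problem in $f_{\rm c}$.

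Next I would reduce (P4) to minimizing $f_{\rm c}$ over the admissible set. Since $R_{\rm S, FSA}^{\rm (null)}$ is strictly decreasing in $f_{\rm c}$ (as already observed below (P4)), the optimal $f_{\rm c}$ is the smallest element of $\big\{\tfrac{2k}{N\Delta_{\theta}}f_0 : k\in\mathbb{Z},\ \mathrm{mod}(k,N)\neq 0\big\}\cap[f_0,f_{\rm H}]$. To locate it I would invoke the main-lobe assumption $\theta_{\rm E}\in(\theta_{\rm B}-\tfrac2N,\theta_{\rm B}+\tfrac2N)$, i.e. $|\Delta_{\theta}|<\tfrac2N$, so that $\tfrac{N|\Delta_{\theta}|}{2}<1$. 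Treating the two signs of $\Delta_{\theta}$ separately (taking $k=1$ if $\Delta_{\theta}>0$ and $k=-1$ if $\Delta_{\theta}<0$, both valid since $N>1$ implies $N\nmid k$), the smallest value of $\frac{2k}{N\Delta_{\theta}}$ that is positive and $\ge 1$ is $\frac{2}{N|\Delta_{\theta}|}$. Therefore the smallest admissible carrier frequency is $f_{\rm c}^{\ast} = \frac{2}{N|\Delta_{\theta}|}f_0$, and (P4) is feasible precisely when this value does not exceed $f_{\rm H}$, i.e. $f_{\rm H}\ge\frac{2}{N|\Delta_{\theta}|}f_0$. This yields the claimed "if and only if" feasibility condition: the "if" direction is immediate once $f_{\rm c}^{\ast}\in[f_0,f_{\rm H}]$, and the "only if" direction follows because every feasible null-steering $f_{\rm c}$ satisfies $f_{\rm c}\ge f_{\rm c}^{\ast}$.

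Finally I would evaluate the objective at the optimizer: substituting $\gamma_{\rm c} = f_0/f_{\rm c}^{\ast} = \frac{N|\Delta_{\theta}|}{2}$ into $\rho_{\rm B} = P_{\max}N\gamma_{\rm c}^2|g_{{\rm B},f_0}|^2$ gives $\rho_{\rm B} = \frac{P_{\rm B}N^2|\Delta_{\theta}|^2}{4}$ with $P_{\rm B} = P_{\max}N|g_{{\rm B},f_0}|^2$, hence $R_{\rm S, FSA}^{\ast} = \log_2\big(1 + \frac{P_{\rm B}N^2|\Delta_{\theta}|^2}{4\sigma^2}\big)$, as stated. The only genuinely delicate step is the second one --- arguing that $k=\pm1$ (rather than some larger $|k|$) yields the minimizer while simultaneously respecting both $f_{\rm c}\ge f_0$ and $\mathrm{mod}(k,N)\neq 0$; this is exactly where the main-lobe restriction $|\Delta_{\theta}|<\tfrac2N$ enters, and it is worth stating explicitly, since without it the minimizing $f_{\rm c}$ could correspond to a different index $k$. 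The remaining manipulations are routine substitutions.
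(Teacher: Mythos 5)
Your proof is correct and follows exactly the route the paper intends (the paper itself omits the details): specialize the null-point condition \eqref{eq:null-point-line} to $\Delta_r=0$, note $\boldsymbol{\Delta_f}$ drops out, minimize $f_{\rm c}$ over the admissible null frequencies using the monotonicity of $R_{\rm S,FSA}^{\rm (null)}$ in $f_{\rm c}$, and substitute $\gamma_{\rm c}=N|\Delta_\theta|/2$. Your explicit handling of the index $k=\pm 1$ via the main-lobe assumption $|\Delta_\theta|<2/N$ is precisely the step the paper leaves implicit, and it is handled correctly.
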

\begin{proof}
	The result can be easily obtained and we omit the detailed proof for brevity.
\end{proof}

To compare the performance upper bound with MAs, we assume that the carrier frequency can vary without an upper limit, which corresponds to an unbounded movable region for MAs.
As such, MAs can always achieve null-steering over Eve \cite{zhu2023movableNullSteering}, for which its achievable secrecy rate is given by $R_{{\rm S}, {\rm MA}}^{(2)} = \log_2\Big(1+\frac{P_{\rm B}}{\sigma_0^2}\Big)$.
\begin{corollary}[FSAs versus MAs for Case 2]
	\label{corollary:case 2 copared with FPA}
	\rm{When $\Delta_f = 0$ and $f_{\rm c} = \frac{2}{N|\Delta_{\theta}|}{f_0}$ given in {\bf Proposition \ref{lemma:Optimal solution to (P4) for case 2}}, the secrecy-rate performance loss of the proposed FSA over MA is given by
		\begin{equation}
		\begin{aligned}
		\Delta R_{{\rm MA}}^{(2)} =  R_{{\rm S}, {\rm FSA}}^{\ast} - R_{{\rm S}, {\rm MA}}^{(2)}
		= \log_2\Big(\frac{\frac{P_{\rm B}N^2|\Delta_{\theta}|^2}{4} + \sigma_0^2}{P_{\rm B} + \sigma_0^2}\Big) \le 0.
		\end{aligned}
		\label{eq:performance loss}
		\end{equation}}
\end{corollary}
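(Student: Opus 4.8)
The plan is to substitute the two closed-form secrecy rates into the definition of $\Delta R_{{\rm S, MA}}^{(2)}$, merge the logarithms into a single log of a ratio, and then fix the sign using the main-lobe assumption on $\Delta_\theta$ that underlies this entire subsection.

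First I would make the standing assumption of the comparison explicit: the carrier-frequency range is taken to be unbounded (equivalently, an unbounded movable region for MAs), so that the feasibility condition $f_{\rm H} \ge \frac{2}{N|\Delta_\theta|}f_0$ of Proposition \ref{lemma:Optimal solution to (P4) for case 2} automatically holds and the expressions for $R_{{\rm S, FSA}}^{\ast}$ and $R_{{\rm S}, {\rm MA}}^{(2)}$ may be used without qualification. Invoking Proposition \ref{lemma:Optimal solution to (P4) for case 2} gives $R_{{\rm S, FSA}}^{\ast} = \log_2\big(1 + \frac{P_{\rm B}N^2|\Delta_{\theta}|^2}{4\sigma^2}\big)$, and the stated MA benchmark is $R_{{\rm S}, {\rm MA}}^{(2)} = \log_2\big(1+\frac{P_{\rm B}}{\sigma^2}\big)$. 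Subtracting and combining the logarithms yields
\begin{equation}
\Delta R_{{\rm S, MA}}^{(2)} = \log_2\frac{1 + \frac{P_{\rm B}N^2|\Delta_{\theta}|^2}{4\sigma^2}}{1 + \frac{P_{\rm B}}{\sigma^2}} = \log_2\Big(\frac{\frac{P_{\rm B}N^2|\Delta_{\theta}|^2}{4} + \sigma^2}{P_{\rm B} + \sigma^2}\Big),
\notag
\end{equation}
which is exactly the claimed identity after cancelling the common factor $1/\sigma^2$ in numerator and denominator.

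Next I would fix the sign. The whole subsection is carried out under the challenging-eavesdropping assumption $\theta_{\rm E}\in(\theta_{\rm B}-\frac{2}{N},\theta_{\rm B}+\frac{2}{N})$, i.e., $|\Delta_\theta| = |\theta_{\rm B}-\theta_{\rm E}| < \frac{2}{N}$, hence $\frac{N^2|\Delta_\theta|^2}{4} < 1$ and therefore $\frac{P_{\rm B}N^2|\Delta_\theta|^2}{4} < P_{\rm B}$. Thus the numerator of the fraction inside the logarithm is strictly smaller than the denominator; since $\log_2(\cdot)$ is increasing and $\log_2 1 = 0$, we conclude $\Delta R_{{\rm S, MA}}^{(2)} < 0$, which in particular gives the stated $\Delta R_{{\rm S, MA}}^{(2)} \le 0$ (equality being approached as Eve moves to the main-lobe edge $|\Delta_\theta| \to \frac{2}{N}$, which explains why the statement is phrased with "$\le$").

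There is no genuinely hard step here; the only subtlety worth flagging is the feasibility caveat mentioned above, namely that $R_{{\rm S, FSA}}^{\ast}$ as given by Proposition \ref{lemma:Optimal solution to (P4) for case 2} presupposes $f_{\rm H} \ge \frac{2}{N|\Delta_\theta|}f_0$, so the corollary is really a statement about the FSA performance \emph{upper bound}. Everything else is an elementary manipulation of the logarithm and a direct appeal to the main-lobe constraint on $\Delta_\theta$.
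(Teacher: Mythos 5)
Your proposal is correct and takes essentially the same route as the paper's (one-line) proof, namely direct substitution of the closed forms from Proposition~\ref{lemma:Optimal solution to (P4) for case 2} and $R_{{\rm S},{\rm MA}}^{(2)}$ followed by combining the logarithms; your explicit sign argument via the main-lobe condition $|\Delta_\theta|<\frac{2}{N}$ and the feasibility caveat simply make precise what the paper leaves implicit.
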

\begin{proof}
	Combining $R_{{\rm S}, {\rm MA}}^{(2)}$ and \eqref{eq:optimal value fpr case 2} directly leads to \eqref{eq:performance loss}, thus completing the proof.
\end{proof}

According to {\bf{Corollary} \ref{corollary:case 2 copared with FPA}}, in the high signal-to-noise ratio (SNR) regime, the secrecy-rate performance gain of MAs over FSAs can be approximated as $\Delta R_{{\rm MA}}^{(2)} \approx 2\log_2\big(\frac{P_{\rm B}N|\Delta_{\theta}|}{2}\big)$, which increases with the angle difference between Bob and Eve in a logarithmic order.
This can be intuitively understood, as a smaller angular separation between Eve and Bob requires the FSA to form a narrower beam to achieve null-steering over Eve, which in turn requires a higher carrier frequency.
The increased carrier frequency leads to a larger channel-gain attenuation factor, thereby enlarging the performance gap between the proposed FSA and MA.

Although FSAs exhibit a performance gap as compared to MAs, it still achieves significant performance improvement over FPAs.
In particular, the secrecy rate of the FPA is
\begin{equation}
	R_{{\rm S}, {\rm FPA}}^{(2)} = \log_2\Big(\frac{P_{\rm B} + \sigma_0^2}{I(\Delta_{\theta})P_{\rm B}+\sigma_0^2}\Big),
\label{eq:rate fixed array}
\end{equation}
where $I(\Delta_{\theta}) = |\mathbf{a}^H(\theta_{\rm B},r_{\rm B})\mathbf{a}(\theta_{\rm E},r_{\rm E})| = \Big|\frac{\sin\big({\frac{1}{2}N\pi}\Delta_{\theta}\big)}{N\sin\big(\frac{\pi}{2}\Delta_{\theta}\big)}\Big|$.
Next, we present the condition under which the proposed FSA outperforms FPA.
\begin{figure*}[h]
	\vspace{-12pt}
	\centering
	\captionsetup[subfloat]{labelfont=rm, format=plain,labelformat=empty}	
	\subfloat[{{\small (a) Geometric explanation.}}]{\includegraphics[width=0.5\columnwidth]{./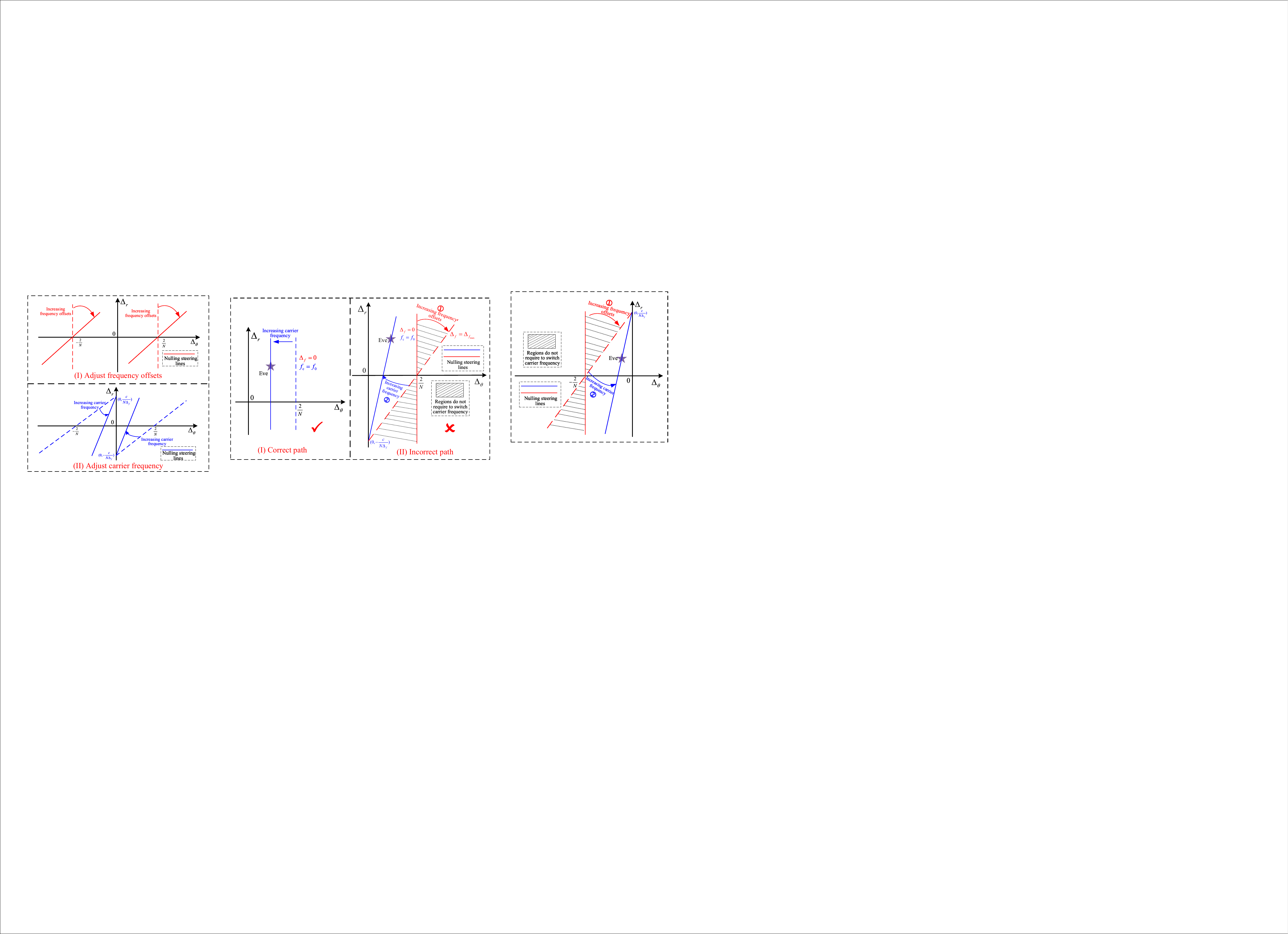}}
	\hfil
	\subfloat[{\small(b) Ilustration of {\bf Proposition \ref{lemma:case 3 1}}.}] {\includegraphics[width=0.78\columnwidth]{./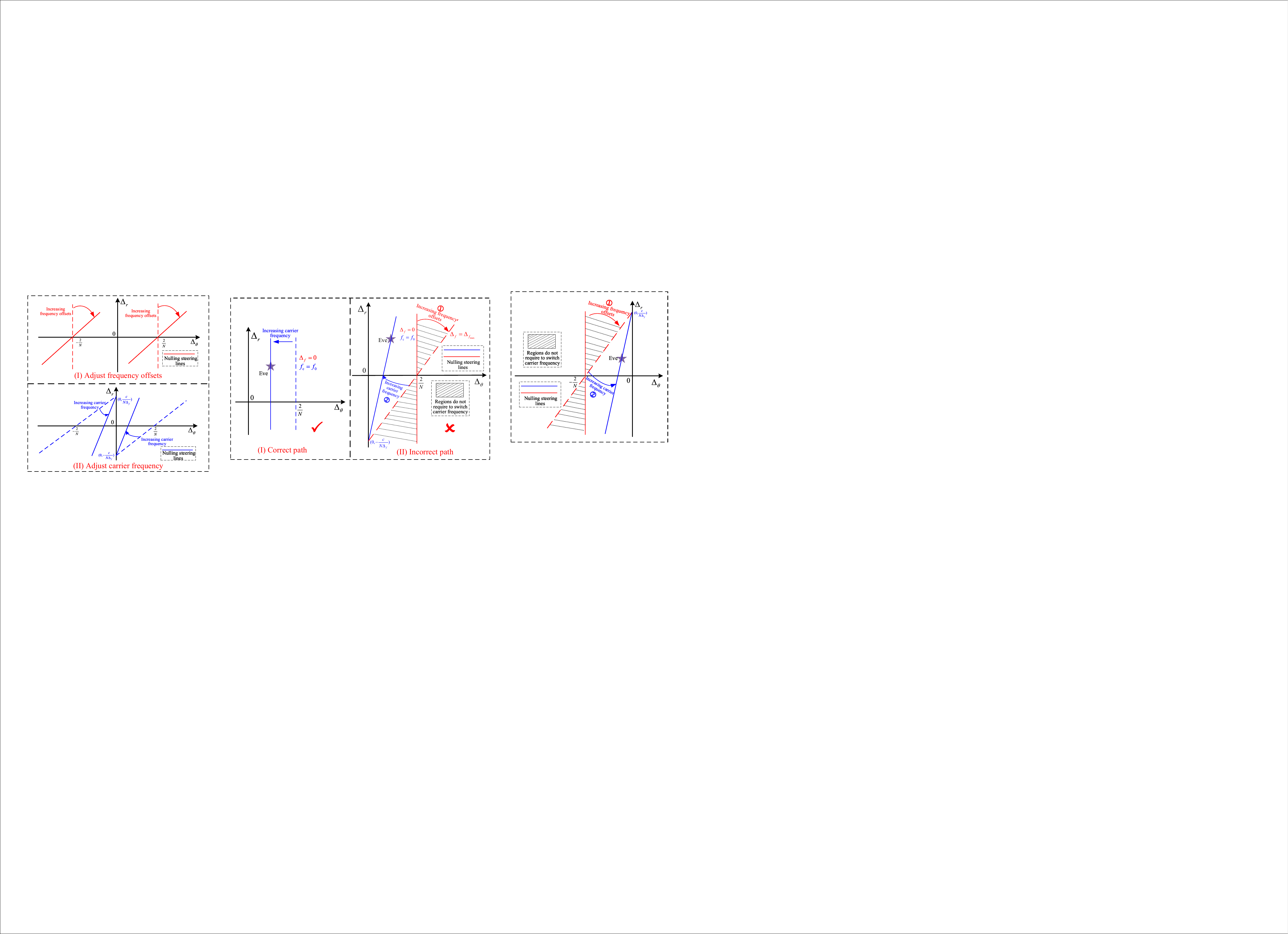}}
	\hfil
	\subfloat[{\small (c) Ilustration of {\bf Proposition \ref{lemma:case 3 2}}.}]{\includegraphics[width=0.5\columnwidth]{./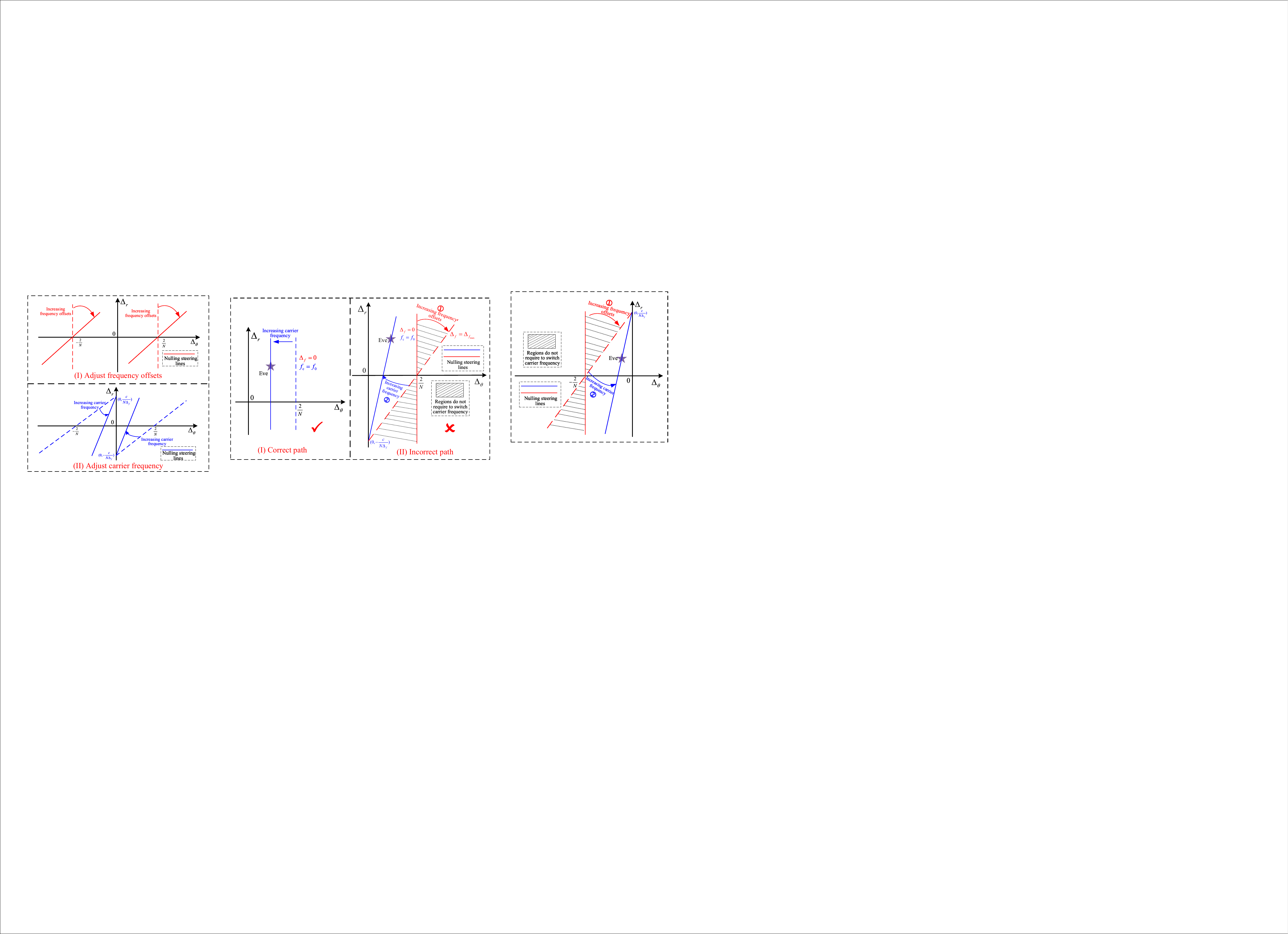}}
	\caption{{Physical interpretation of the selection of carrier frequency and the corresponding offset.}}
	\label{fig:physical}
	\vspace{-15pt}
\end{figure*}
\begin{corollary}[FSAs versus FPA for Case 2]
	\label{lemma:condition}
	\rm{When ${f_{\rm c}}^{\ast} = \frac{2}{N\Delta_{\theta}}{f_0}$ and $\Delta_{r} = 0$ (i.e., $P_{\rm B} = P_{\rm E}$), the secrecy-rate performance \emph{gain} of the proposed FSA over FPA is given by
		\begin{equation}
			\begin{aligned}
				&\Delta R_{{\rm FPA}}^{(2)} =  [R_{{\rm S}, {\rm FSA}}^{\ast} - R_{{\rm S}, {\rm FPA}}^{(2)}]^{+}\\
				&= \Bigg[\log_2\Bigg(\frac{\Big(\frac{P_{\rm B}N^2|\Delta_{\theta}|^2}{4} + \sigma_0^2\Big)\Big(I(\Delta_{\theta})P_{\rm B}+\sigma_0^2\Big)}{\sigma_0^2(P_{\rm B} + \sigma_0^2)}\Bigg)\Bigg]^{+}.
			\end{aligned}
		\end{equation}}
\end{corollary}
\begin{proof}
	Switching the carrier frequency as that in {\bf{Proposition}} \ref{lemma:Optimal solution to (P4) for case 2} is necessary when $\Delta R_{{\rm FPA}}^{(2)} > 0$ and the result can be easily obtained.
\end{proof}

Based on {\bf{Corollary} \ref{corollary:case 2 copared with FPA}}, let $\Delta R_{{\rm FPA}}^{(2)} \ge 0$, which can be simplified as {\small$I(\Delta_{\theta}) \ge I_{\rm th}(\Delta_{\theta}) \triangleq \frac{(1-\frac{N^2\Delta_{\theta}^2}{4})\sigma_0^2 }{\frac{N^2\Delta_{\theta}^2}{4}P_{\rm B}+\sigma_0^2}$}.
As such, if the angle difference $\Delta_\theta$ between Eve and Bob satisfies $I(\Delta_{\theta}) \ge I_{\rm th}(\Delta_{\theta})$, the proposed FSA outperforms FPA in terms of the achievable secrecy rate and the corresponding optimal carrier frequency should be switched as that given in {\bf{Proposition} \ref{lemma:Optimal solution to (P4) for case 2}}.
This implies that switching the carrier frequency to enable null steering towards Eve is beneficial for enhancing the achievable secrecy rate as compared to FPA.
It is worth noting that if the design objective is to ensure no information leakage to Eves rather than maximizing the secrecy rate, the carrier frequency should still be tuned following the parameter settings specified in {\bf Proposition \ref{lemma:Optimal solution to (P4) for case 2}}.
\vspace{-8pt}
\subsection{Third Case: $\Delta_{\theta} \neq 0, \Delta_{r} \neq 0$}
For the general case, we demonstrate that null steering over Eve can always be achieved.
\emph{Interestingly}, it is shown that when $\Delta_{\theta}\Delta_{r} > 0$, either switching the carrier frequency or imposing a common frequency offset can realize null steering, while using both methods does not provide additional performance gains.
In contrast, when $\Delta_{\theta}\Delta_{r} < 0$, the common frequency offset should be tuned to its largest value for achieving the optimal solution to (P4), if the available frequency offset is not sufficiently large.
\begin{proposition}[Optimal solution to (P4) when $\Delta_{\theta}\Delta_{r} > 0$]
	\label{lemma:case 3 1}
	\rm{When $\Delta_{\theta}\Delta_{r} > 0$, the optimal solution to (P4) is given by}
	{\begin{equation}
		\label{eq:case 3 1}
		\left\{\begin{matrix}	
		f_{\rm c}^{\ast} = \frac{2f_0}{N|\Delta_{\theta}|},\Delta_f^{\ast} = 0, \text{if}~\Delta_{f_{\max}} \le \frac{c}{2\Delta_{r}}(\Delta_{\theta}+\frac{2\Delta_{\theta}}{N|\Delta_{\theta}|}),\\
		\!\!\!\!\!\!\!f_{\rm c}^{\ast} = f_0,~~~~\Delta_f^{\ast} = \frac{c}{2\Delta_{r}}(\Delta_{\theta}+\frac{2\Delta_{\theta}}{N|\Delta_{\theta}|}), ~\text{otherwise}.
		\end{matrix}\right. 
		\end{equation}}
\end{proposition}
\begin{proof}
	First, if $\Delta_{f_{\max}} \ge \frac{c}{2\Delta_{r}}(\Delta_{\theta}+\frac{2\Delta_{\theta}}{N|\Delta_{\theta}|})$, it is easy to verify that $\{\Delta_f^{\ast} = \frac{c}{2\Delta_{r}}(\Delta_{\theta}+\frac{2\Delta_{\theta}}{N|\Delta_{\theta}|}), f_{\rm c}^{\ast} = f_0\}$ is the optimal solution to (P4).
	Next, if $\Delta_{f_{\max}} < \frac{c}{2\Delta_{r}}(\Delta_{\theta}+\frac{2\Delta_{\theta}}{N|\Delta_{\theta}|})$ with $\Delta_\theta > 0 $ and $\Delta_r > 0$, we have
	\begin{equation}
	0 \le \frac{2\Delta_{r}\Delta_f}{c} < \Delta_{\theta} + \frac{2}{N}.
	\label{eq:fanwei}
	\end{equation} 
	Since the null-steering condition is $\frac{f_{\rm c}}{f_0}\Delta_{\theta} = \frac{2k}{N}+\frac{2\Delta_{r}\Delta_f}{c}$ (see \eqref{eq:null-point-line}), $\frac{2k}{N}+\frac{2\Delta_{r}\Delta_f}{c}$ should satisfy
	\begin{equation}
	\frac{2k}{N}+\frac{2\Delta_{r}\Delta_f}{c} \ge \Delta_{\theta}.
	\label{eq:conflict}
	\end{equation} 
	If $k \le -1$, based on \eqref{eq:fanwei}, we can obtain $\frac{2k}{N} + \frac{2\Delta_{r}\Delta_f}{c} < \Delta_{\theta}$, which contradicts \eqref{eq:conflict}, thereby implying that $k > 0$ must hold.
	As such, to minimize $f_{\rm c}$, we should choose $\Delta_f^{\ast} = 0$ with $k=1$ and $f_{\rm c}^{\ast} = \frac{2f_0}{N\Delta_{\theta}}$.
	Similarly, when $\Delta_\theta < 0 $ and $\Delta_r < 0$, the optimal solution to (P4) can be obtained as $\Delta_f^{\ast} = 0$ and $f_{\rm c}^{\ast} = -\frac{2f_0}{N\Delta_{\theta}}$.
	Combining the results of the above two cases leads to \eqref{eq:case 3 1}, hence completing the proof.
\end{proof}

{\bf{Proposition} \ref{lemma:case 3 1}} indicates that in scenarios where $\Delta_{\theta}\Delta_{r} > 0$, a single frequency control method (i.e., switching carrier frequency or imposing frequency offset) is sufficient to achieve null steering over Eve. The corresponding performance analysis can be found in Sections \ref{sec:case 1} and \ref{sec:second case}, respectively.

\begin{proposition}[Optimal solution to (P4) when $\Delta_{\theta}\Delta_{r} < 0$]
	\label{lemma:case 3 2}
	\rm{If $\Delta_{\theta} \Delta_{r} < 0$, the optimal solution to (P4) is given by}
	{\small\begin{equation}
		\notag
		\left\{\begin{matrix}
		\!\!\!\Delta_f^{\ast} \!=\! \frac{c}{2\Delta_{r}}(\Delta_{\theta}\!-\!\frac{2\Delta_{\theta}}{N|\Delta_{\theta}|}),~f_{\rm c}^{\ast} \!=\! f_0,~ \text{if}~\Delta_{f_{\max}} \!\ge\! \frac{c}{2\Delta_{r}}(\Delta_{\theta}\!-\!\frac{2\Delta_{\theta}}{N|\Delta_{\theta}|}),\!\\ 
		\Delta_f^{\ast} = \Delta_{f_{\max}},~~~~~~~~~~~~~ f_{\rm c}^{\ast} = \frac{2f_0}{N|\Delta_{\theta}|} + \frac{2\Delta_{r}\Delta_{f_{\max}}f_0}{c\Delta_{\theta}},~\!{\rm{otherwise}}.
		\end{matrix}\right. 
		\end{equation}}
\end{proposition}
\begin{proof}
	First if $\Delta_{f_{\max}} \ge \frac{c}{2\Delta_{r}}(\Delta_{\theta}-\frac{2\Delta_{\theta}}{N|\Delta_{\theta}|})$, it is easy to verify that $\{\Delta_f^{\ast} = \frac{c}{2\Delta_{r}}(\Delta_{\theta}-\frac{2\Delta_{\theta}}{N|\Delta_{\theta}|}), f_{\rm c}^{\ast} = f_0\}$ is the optimal solution to (P4).
	Next, if $\Delta_{f_{\max}} < \frac{c}{2\Delta_{r}}(\Delta_{\theta}-\frac{2\Delta_{\theta}}{N|\Delta_{\theta}|})$ with $\Delta_\theta > 0$ and $\Delta_r < 0$, we have
	\begin{equation}
		\Delta_{\theta} - \frac{2}{N} < \frac{2\Delta_{r}\Delta_f}{c} \le 0.
		\label{eq:new condition}
	\end{equation} 
	For the null-steering condition $\frac{f_{\rm c}}{f_0}\Delta_{\theta} = \frac{2k}{N}+\frac{2\Delta_{r}\Delta_f}{c}$, given $k = 1$ and \eqref{eq:new condition}, we have $\frac{2k}{N} + \frac{2\Delta_{r}\Delta_f}{c} > \Delta_{\theta}$, which is sufficient to achieve null-steering over Eve.
	 As such, to minimize $f_{\rm c}$, we should choose $\Delta_f^{\ast} = \Delta_{f_{\max}}$ with $k=1$ and $f_{\rm c}^{\ast} = \frac{2f_0}{N\Delta_{\theta}} + \frac{2\Delta_{r}\Delta_{f_{\max}}f_0}{c\Delta_{\theta}}$.
	 Similarly, when $\Delta_\theta < 0 $ and $\Delta_r > 0$, we can obtain the optimal solution to (P4) as $\Delta_f^{\ast} = \Delta_{f_{\max}}$ and $f_{\rm c}^{\ast} = -\frac{2f_0}{N\Delta_{\theta}} + \frac{2\Delta_{r}\Delta_{f_{\max}}f_0}{c\Delta_{\theta}}$.
	 Thus, combining these results of the two cases leads to the desired results.
\end{proof}

{\bf{Proposition} \ref{lemma:case 3 2}} indicates that in scenarios where $\Delta_{\theta}\Delta_{r} < 0$, if the frequency offset is not sufficiently large, the frequency offset should be maximized to achieve optimal solution to (P4).
As compared to Case 2 in Section \ref{sec:second case}, the required switching carrier frequency is smaller, indicating that the introduction of a maximum frequency offset further enhances PLS performance.
\begin{remark}[Geometric explanation for {\bf Proposition \ref{lemma:case 3 1} } and {\bf \ref{lemma:case 3 2}}]
	\emph{Since we only consider the case where $\theta_{\rm E}\in (\theta_{\rm B}-\frac{2}{N},\theta_{\rm B}+\frac{2}{N})$, i.e., $\Delta_{\theta} \in [-2/N,2/N]$, only two null-steering lines in \eqref{eq:null-point-line} ($k=1$ and $k=-1$) should be considered as illustrated in Figs. \ref{fig:physical}(b) and \ref{fig:physical}(c). 
	We first rewrite $\frac{f_{\rm c}}{f_0}\Delta_{\theta} = \frac{2k}{N}+\frac{2\Delta_{r}\Delta_f}{c}$ in \eqref{eq:null-point-line} as $\Delta_r = \frac{c}{2\Delta_ff_0}f_{\rm c}\Delta_{\theta} - \frac{kc}{N\Delta_f}$.
	Two key observations are made as follows.
	\begin{itemize}
		\item First, increasing the carrier frequency \textit{increases} the slope of the null-steering lines, whereas enlarging the frequency offsets \textit{reduces} their slope.
		\item Second, adjusting the frequency offset make the null-steering lines rotate around the point ($\pm N/2$, $0$) as illustrated in Fig. \ref{fig:physical}(a-I), while switching the carrier frequency makes the null-steering lines rotate around their intercept with the $\Delta$-axis (i.e., the points $(0, \frac{kc}{N\Delta_f}), k = \pm 1$) as shown in Fig. \ref{fig:physical}(a-II).
	\end{itemize}
	For {\bf Proposition \ref{lemma:case 3 1}}, we take $\Delta_r > 0, \Delta_{\theta} > 0$ as an example.
	As shown in Fig. \ref{fig:physical}(b), when the maximal frequency offsets fails to rotate the null-steering lines toward Eve’s position, the slope of these lines should be \textit{increased rather than decreased}.
	In this case, the frequency offset should be set to zero instead of being enlarged to its maximum value, otherwise a higher carrier frequency is required to shift the null-steering lines toward Eve’s location (see Fig. \ref{fig:physical}(b-II)).
	Instead, for the case in {\bf Proposition \ref{lemma:case 3 2}} (taking $\Delta_r > 0, \Delta_{\theta} < 0$ as an example), we should \textit{reduce} the slope of the null steering line as shown in Fig. \ref{fig:physical}(c), which is achieved by increasing the frequency offset.
	Although this imposing maximal frequency offsets across individual antennas is insufficient to achieve null steering over Eve, a subsequent increase in the carrier frequency can rotate the null steering line around the point ($\frac{c}{N\Delta_f}$,0) towards Eve \textit{in a relay manner}, as shown in Fig. \ref{fig:physical}(c).
}
\end{remark}

\vspace{-10pt}
\section{Proposed Solution to Problem (P1)}
\label{Sec:solving optimization problem}
In this section, we propose an efficient algorithm to solve Problem (P1) by using the BCD and PGA methods. Specifically, we decompose (P1) into three subproblems, which
\emph{alternately} optimize the transmit beamformer, FIV and carrier frequency, respectively.
\vspace{-5pt}
\subsection{Optimize $\mathbf{w}_{\rm B}(t)$ given $\boldsymbol{\Delta}_f$ and $f_{\rm c}$}
Given any feasible $\boldsymbol{\Delta}_f$ and $f_{\rm c}$, Problem (P1) reduces to 
\begin{subequations}
	\begin{align}
	&({\rm P1.1})~~ \max_{\mathbf{w}_{\rm B}} \frac{\mathbf{w}_{\rm B}^H\mathbf{A}\mathbf{w}_{\rm B} + 1}{\mathbf{w}_{\rm B}^H\mathbf{B}\mathbf{w}_{\rm B}+ 1} \\
	&~~~~~~~~~~~\text {s.t.}~\|\mathbf{w}_{\rm B}\|^2 \leq P_{\max},\label{subeq:power constraint 2} 
	\end{align}
\end{subequations}
where $\mathbf{A} = \frac{1}{\sigma^2}\mathbf{h}_{\rm B}(f_{\rm c}, \boldsymbol{\Delta}_f)\mathbf{h}^H_{\rm B}(f_{\rm c}, \boldsymbol{\Delta}_f) $ and $\mathbf{B} = \frac{1}{\sigma_0^2}\sum_{m=1}^{M}\mathbf{h}_{{\rm E},m}(f_{\rm c}, \boldsymbol{\Delta}_f)\mathbf{h}^H_{{\rm E},m}(f_{\rm c}, \boldsymbol{\Delta}_f)$.
The objective function in (P2) is the well-known generalized Rayleigh quotient, for which its optimal solution is given by
\begin{equation}
	\mathbf{w_{\rm B}^{\ast}} = \sqrt{P_{\rm max}}\mathbf{c}_{\rm max}.
	\label{eq:optimal beamformer}
\end{equation}
Herein, $\mathbf{c}_{\rm max}$ denotes the eigenvector w.r.t. the largest eigenvalue of the matrix $ \big(\mathbf{B}+\frac{1}{P_{\rm max}}\mathbf{I}_{N}\big)^{-1}\big(\mathbf{A}+\frac{1}{P_{\rm max}}\mathbf{I}_{N}\big)$.

\subsection{Optimize $f_{\rm c}$ given $\mathbf{w}_{\rm B}(t)$ and $\boldsymbol{\Delta}_f$}
Given any feasible $\mathbf{w}_{\rm B}(t)$ and $\boldsymbol{\Delta}_f$, the objective function $R_{{\rm S}}(\mathbf{w}_{\rm B}, \boldsymbol{\Delta}_f, f_{\rm c})$ in (P1) is a one-dimensional (1D) function w.r.t. $f_{\rm c}$, which can be effectively solved via a 1D linear search. 
Specifically, let $N_f$ denote the number of samples, and then the search space can be constructed as $\mathcal{F} = \big\{ f_n = f_0 + \frac{n-1}{N_f - 1}\left(f_H - f_0\right),n = 1,2,\ldots,N_f 
\big\}$. 
Herein, the search step size is $\frac{n-1}{N_f - 1}(f_{\rm H}-f_0)$. As such, the optimal carrier frequency is given by
\begin{equation}
\label{eq:1D search}
	f_{\rm c}^{\ast} = \arg\max\limits_{\mathcal{F}} R_{{\rm S}}(\mathbf{w}_{\rm B}, \boldsymbol{\Delta}_f, f_{\rm c}).
\end{equation}

\vspace{-5pt}
\subsection{Optimize $\boldsymbol{\Delta}_f$ given $\mathbf{w}_{\rm B}(t)$ and $f_{\rm c}$}
\label{sec:optimize f}
To facilitate gradient computation, let $\mathbf{w_{\rm B}^{\ast}} = \mathbf{u} + \jmath \mathbf{v}$, where $\mathbf{u}$ and $\mathbf{v}$ are the real and imaginary parts of $\mathbf{w_{\rm B}^{\ast}}$, respectively.
For convenience, we rewrite the channel vector for Eve $m$ as $\mathbf{h}_{{\rm E},m}(f_{\rm c}, \boldsymbol{\Delta}_f) = h_m (\mathbf{x}_m + \jmath \mathbf{y}_m)$,
where $h_m = \frac{f_0}{f_c}g_{{\rm E},f_0}^{(m)} $ denotes the path gain for Eve $m$.
In addition, $\mathbf{x}_m = [x_{m,1},\cdots,x_{m,n},\cdots, x_{m,N}]^{T} $ and $\mathbf{y}_m = [y_{m,1},\cdots,y_{m,n},\cdots, y_{m,N}]^{T} $, where
\begin{equation}
x_{m,n} = \cos\Big(\frac{2\pi}{c}(f_{\rm c}\delta_n d_0\sin\theta_m-\Delta_{f_n} r_m) \Big),
\end{equation}
\begin{equation}
	y_{m,n} = \sin\Big(\frac{2\pi}{c}(f_{\rm c}\delta_n d_0\sin\theta_m-\Delta_{f_n} r_m)\Big).
\end{equation}

Similarly, the channel vector for Bob can be rewritten as $\mathbf{h}_{{\rm B}}(f_{\rm c}, \boldsymbol{\Delta}_f) \triangleq  (\mathbf{x}_0 + \jmath \mathbf{y}_0)h_{0} $, where $h_{{0}} = \frac{f_0}{f_c}g_{{\rm B},f_0} $ represents the effective channel gain of Bob. Moreover, $\mathbf{x}_0$ and $\mathbf{y}_0$ have similar expressions with $\mathbf{x}_m$ and $\mathbf{y}_m$, which are omitted for brevity.

Based on the above, given any feasible $\mathbf{w}_{\rm B}^{\ast}$ and $f_c$, the optimization problem (P1) reduces to
\begin{subequations}
	\begin{align}
	&({\rm P1.2})~ \max_{\boldsymbol{\Delta}_f} \widetilde{R}_{{\rm S}}(\boldsymbol{\Delta}_f) = \log_2\Big(1 + {\widetilde{\gamma}_0 g(\mathbf{x}_0, \mathbf{y}_0)} \Big) \\
	&~~~~~~~~~~~~~~~~~~~~-\log_2\Big(1 + {\sum_{m=1}^{M}\widetilde{\gamma}_m g(\mathbf{x}_m, \mathbf{y}_m)} \Big)\notag\\
	&~~~~~~~~~~~\text {s.t.}~\eqref{subeq:frequency constraint},\eqref{subeq:Filtering constraint},
	\end{align}
\end{subequations}
where $\widetilde{\gamma}_i = \frac{|h_0|^2}{\sigma_0^2}, i\in \mathcal{I} \triangleq \{0,1,\cdots,M\} $. Specifically, the function $g(\mathbf{x}_i, \mathbf{y}_i)$ is defined as $g(\mathbf{x}_i, \mathbf{y}_i) = \mathbf{x}_i^T\mathbf{C}\mathbf{x}_i + \mathbf{y}_i^T\mathbf{C}\mathbf{y}_i + 2\mathbf{x}_i^T\mathbf{D}\mathbf{y}_i$, where $\mathbf{C} \triangleq \mathbf{u}\mathbf{u}^T + \mathbf{v}\mathbf{v}^T$ and $\mathbf{D} \triangleq \mathbf{u}\mathbf{v}^T - \mathbf{v}\mathbf{u}^T$. 

(P1.2) is a non-convex optimization problem (P1.2) for which its suboptimal solution can be found by using the PGA method. Specifically, given the FIV $\boldsymbol{\Delta}_f^{(s)}$ in the $s$-th iteration, $\boldsymbol{\Delta}_f^{(s+1)}$ in the $(s+1)$-th iteration can be updated as 
\begin{subequations}
	\label{eq:update f}
	\begin{align}
		&\text{(Step. 1)}~~ \boldsymbol{t}^{(s+1)} =
		\boldsymbol{\Delta}_f^{(s)} + \mu\nabla_{\boldsymbol{\Delta}_f^{(s)}}\widetilde{R}_{{\rm S}}(\boldsymbol{\Delta}_f),\\
		&\text{(Step. 2)}~~
		\boldsymbol{\Delta}_f^{(s+1)} =\arg \min _{\boldsymbol{\Delta}_f}\big|\big|\boldsymbol{t}^{(s+1)}-\boldsymbol{\Delta}_f\big|\big|\\
		&~~~~~~~~~~~~~~~~~~~~~~~~\text{s.t.}~ \eqref{subeq:frequency constraint},\eqref{subeq:Filtering constraint},\notag
	\end{align}
\end{subequations}
where $\widetilde{R}_{{\rm S}}(\boldsymbol{\Delta}_f)$ represents the gradient of $\widetilde{R}_{{\rm S}}(\boldsymbol{\Delta}_f)$ w.r.t. $\boldsymbol{\Delta}_f$ and $\mu$ denotes the step size for gradient ascent, respectively.

For the non-convex problem (P1.3), it can be transformed into a mixed-integer quadratic programming (MIQP) problem by using the 1D clustering method, given by
\begin{subequations}
	\begin{align}
		 &\min_{\{\boldsymbol{\Delta}_f, \mathbf{u},\mathbf{c},\mathbf{Z}\}}\big|\big|\boldsymbol{\Delta}_f - \boldsymbol{t}^{(s+1)}\big|\big|_2^2
		 \\&~~~~\text{s.t.}~u_k \in\{0,1\}, z_{n,k}\in\{0,1\}.\forall k,n,\label{eq:binary constraint}
		 \\&~~~~~~~\sum_{k = 1}^K z_{n,k} = 1, \forall n\in\mathcal{N},\label{eq:only one actioveate}
		 \\&~~~~~~~z_{n,k} \le u_k,\forall n,k, \label{eq:only one cluster}
		 \\ &~~~~~~~|\Delta_{f_n} - c_k| \le C_{\rm large}(1 - z_{n,k}), \forall k,n,\label{eq:assign}
		 \\&~~~~~~~c_{k+1} \ge c_k + B - C_{\rm large}(2 - u_k - u_{k+1}),\forall k,\label{eq:gap}
		 \\&~~~~~~~0 \le {\Delta}_{f_n} \le \Delta_{f_{\max}},~0 \le c_k \le \Delta_{f_{\max}},\forall n,k,
	\end{align}
\end{subequations}where $K$ denotes the number of clusters and $C_{\rm large}$ represents a sufficiently large constant. 
Herein, $\mathbf{c} = [c_1,c_2,\cdots,c_K]^T$ is the cluster center vector, while $\mathbf{u} = [u_1,u_2,\cdots,u_K]^T$ represents cluster activation variables with $u_k = 1$ denoting the $k$-th cluster is activated.
In addition, $\mathbf{Z}$ denotes the binary assignment matrix, where $z_{n,k} = [\mathbf{Z}]_{n,k} = 1$ indicates that the $n$-th entry in $\boldsymbol{\Delta}_f$ to be optimized is assigned to cluster $k$.
The constraint \eqref{eq:binary constraint} restricts the assignment and activation variables to binary values, while the constraints \eqref{eq:only one actioveate} and \eqref{eq:only one cluster} ensure a entry in $\boldsymbol{\Delta}_f$ is assigned to only one cluster.
Moreover, if the $n$-th entry in $\boldsymbol{\Delta}_f$ is assigned to the $k$-th cluster, \eqref{eq:assign} ensures $\boldsymbol{\Delta}_{f_n} = c_k$, and the constraint \eqref{eq:gap} restricts a minimum separation of $B$ between any two consecutive active clusters.
The above MIQP problem can be effectively solved by using the branch-and-cut algorithm \cite{bienstock1996computational}.  

Next, we obtain the closed-form gradient of $\widetilde{R}_{{\rm S}}(\boldsymbol{\Delta}_f)$ w.r.t. $\boldsymbol{\Delta}_f$ in closed form as follows.
\begin{proposition}
	\rm{Given any feasible beamformer $\mathbf{w_{\rm B}^{\ast}} = \mathbf{u} + \jmath \mathbf{v}$, the gradient of $\widetilde{R}_{{\rm S}}(\boldsymbol{\Delta}_f)$ w.r.t. $\boldsymbol{\Delta}_f$ is given by}
	{\small\begin{equation}
		\begin{aligned}
		&\nabla_{\boldsymbol{\Delta}_f}\widetilde{R}_{{\rm S}}(\boldsymbol{\Delta}_f) = \frac{\widetilde{\gamma}_0\Big(\mathbf{X}_0\big(2\mathbf{C}\mathbf{x}_0 + 2\mathbf{D}\mathbf{y}_0\big) + \mathbf{Y}_0\big(2\mathbf{C}\mathbf{y}_0 + 2\mathbf{D}^T\mathbf{x}_0\big)\Big)}{\ln(2)\big(1 + {\widetilde{\gamma}_0 g(\mathbf{x}_0, \mathbf{y}_0)}\big)}\\  &-\frac{\sum_{m=1}^{M}\widetilde{\gamma}_m\Big(\mathbf{X}_m\big(2\mathbf{C}\mathbf{x}_m + 2\mathbf{D}\mathbf{y}_m\big) + \mathbf{Y}_m\big(2\mathbf{C}\mathbf{y}_m + 2\mathbf{D}^T\mathbf{x}_m\big)\Big)}{\ln(2)\big(1 + {\sum_{m=1}^{M} \widetilde{\gamma}_m g(\mathbf{x}_m, \mathbf{y}_m)}\big)},
		\end{aligned}
		\label{eq:full GA}
		\end{equation}}{\rm where
		{\small$\mathbf{X}_i = \text{diag}\Big\{\frac{2\pi}{c}r_i\sin\Big(\frac{2\pi}{c}(f_{\rm c}\delta_n d_0\sin\theta_m\!-\!\Delta_{f_n} r_m)\Big)\Big\}$} and
		{\small$\mathbf{Y}_i \!=\! \text{diag}\Big\{-\frac{2\pi}{c}r_i\cos\Big(\frac{2\pi}{c}(f_{\rm c}\delta_n d_0\sin\theta_m-\Delta_{f_n} r_m) \Big)\Big\}, \forall i \in \mathcal{I}$}}.	
	\label{prpo:GA f}
\end{proposition}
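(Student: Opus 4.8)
The plan is to obtain the gradient by applying the chain rule in two stages: first peeling off the two outer $\log_2(\cdot)$ terms, then pushing the gradient through the dependence of each quadratic form $g(\mathbf{x}_i,\mathbf{y}_i)$ on $\boldsymbol{\Delta_f}$, which enters \emph{only} through the trigonometric vectors $\mathbf{x}_i$ and $\mathbf{y}_i$ (the scalars $\widetilde{\gamma}_i$ and the path gains $h_i$ are independent of $\boldsymbol{\Delta_f}$). Differentiating the outer logarithms and using $\frac{d}{d\boldsymbol{\Delta_f}}\log_2(1+\widetilde\gamma_i g)=\frac{\widetilde\gamma_i}{\ln(2)(1+\widetilde\gamma_i g)}\nabla_{\boldsymbol{\Delta_f}}g$ gives
\begin{equation}
\nabla_{\boldsymbol{\Delta_f}}\widetilde{R}_{\rm S}(\boldsymbol{\Delta_f})=\frac{\widetilde\gamma_0\,\nabla_{\boldsymbol{\Delta_f}}g(\mathbf{x}_0,\mathbf{y}_0)}{\ln(2)\big(1+\widetilde\gamma_0 g(\mathbf{x}_0,\mathbf{y}_0)\big)}-\frac{\sum_{m=1}^{M}\widetilde\gamma_m\,\nabla_{\boldsymbol{\Delta_f}}g(\mathbf{x}_m,\mathbf{y}_m)}{\ln(2)\big(1+\sum_{m=1}^{M}\widetilde\gamma_m g(\mathbf{x}_m,\mathbf{y}_m)\big)},
\end{equation}
so it remains only to evaluate $\nabla_{\boldsymbol{\Delta_f}}g(\mathbf{x}_i,\mathbf{y}_i)$ for a generic index $i\in\mathcal{I}$.

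Next I would compute this inner gradient. Using symmetry of $\mathbf{C}=\mathbf{u}\mathbf{u}^T+\mathbf{v}\mathbf{v}^T$ and the identities $\nabla_{\mathbf{x}}(\mathbf{x}^T\mathbf{D}\mathbf{y})=\mathbf{D}\mathbf{y}$, $\nabla_{\mathbf{y}}(\mathbf{x}^T\mathbf{D}\mathbf{y})=\mathbf{D}^T\mathbf{x}$, one gets $\nabla_{\mathbf{x}_i}g=2\mathbf{C}\mathbf{x}_i+2\mathbf{D}\mathbf{y}_i$ and $\nabla_{\mathbf{y}_i}g=2\mathbf{C}\mathbf{y}_i+2\mathbf{D}^T\mathbf{x}_i$. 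The key structural observation is that with $x_{i,n}=\cos\psi_{i,n}$, $y_{i,n}=\sin\psi_{i,n}$ and phase $\psi_{i,n}=\tfrac{2\pi}{c}(f_{\rm c}\delta_n d_0\sin\theta_i-\Delta_{f_n}r_i)+2\pi\Delta_{f_n}t$, the entry $x_{i,n}$ (resp.\ $y_{i,n}$) depends on $\boldsymbol{\Delta_f}$ only through its $n$-th component $\Delta_{f_n}$, so the Jacobians $\partial\mathbf{x}_i/\partial\boldsymbol{\Delta_f}$ and $\partial\mathbf{y}_i/\partial\boldsymbol{\Delta_f}$ are diagonal. Computing $\partial x_{i,n}/\partial\Delta_{f_n}=-\sin\psi_{i,n}\,(\partial\psi_{i,n}/\partial\Delta_{f_n})$ and $\partial y_{i,n}/\partial\Delta_{f_n}=\cos\psi_{i,n}\,(\partial\psi_{i,n}/\partial\Delta_{f_n})$, with the phase derivative contributing the factor $\tfrac{2\pi}{c}r_i$ consistent with the time-invariant beam-pattern convention adopted after~\eqref{eq:steering vector}, identifies these diagonal Jacobians with the matrices $\mathbf{X}_i$ and $\mathbf{Y}_i$ in the statement. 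A final chain-rule step, $\nabla_{\boldsymbol{\Delta_f}}g(\mathbf{x}_i,\mathbf{y}_i)=\mathbf{X}_i^T\nabla_{\mathbf{x}_i}g+\mathbf{Y}_i^T\nabla_{\mathbf{y}_i}g$ together with $\mathbf{X}_i^T=\mathbf{X}_i$ and $\mathbf{Y}_i^T=\mathbf{Y}_i$, then yields $\nabla_{\boldsymbol{\Delta_f}}g(\mathbf{x}_i,\mathbf{y}_i)=\mathbf{X}_i(2\mathbf{C}\mathbf{x}_i+2\mathbf{D}\mathbf{y}_i)+\mathbf{Y}_i(2\mathbf{C}\mathbf{y}_i+2\mathbf{D}^T\mathbf{x}_i)$, and substituting this into the displayed expression above reproduces~\eqref{eq:full GA}.

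This is routine matrix calculus, so I do not expect a deep obstacle; the two points that require care are (i) the cross term $2\mathbf{x}_i^T\mathbf{D}\mathbf{y}_i$ with $\mathbf{D}=\mathbf{u}\mathbf{v}^T-\mathbf{v}\mathbf{u}^T$ non-symmetric (indeed skew-symmetric, $\mathbf{D}^T=-\mathbf{D}$), so that its $\mathbf{x}_i$- and $\mathbf{y}_i$-gradients must correctly carry $\mathbf{D}$ and $\mathbf{D}^T$ respectively, and (ii) recognizing and exploiting the diagonal (rather than full $N\times N$) structure of $\partial\mathbf{x}_i/\partial\boldsymbol{\Delta_f}$ and $\partial\mathbf{y}_i/\partial\boldsymbol{\Delta_f}$, which is exactly what collapses the result into the compact closed form stated and makes $\mathbf{X}_i,\mathbf{Y}_i$ depend on $r_i$ alone.
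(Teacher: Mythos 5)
Your proposal is correct and follows essentially the same route as the paper's Appendix A: differentiate the two $\log_2$ terms, then apply the chain rule to $g(\mathbf{x}_i,\mathbf{y}_i)$ through the entrywise (diagonal-Jacobian) dependence of $\mathbf{x}_i,\mathbf{y}_i$ on $\boldsymbol{\Delta_f}$, yielding $\nabla_{\boldsymbol{\Delta_f}}g(\mathbf{x}_i,\mathbf{y}_i)=\mathbf{X}_i(2\mathbf{C}\mathbf{x}_i+2\mathbf{D}\mathbf{y}_i)+\mathbf{Y}_i(2\mathbf{C}\mathbf{y}_i+2\mathbf{D}^T\mathbf{x}_i)$. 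Your explicit treatment of the skew-symmetry of $\mathbf{D}$ and of the $2\pi t$ term (absorbed by the time-invariant beam-pattern convention, which the paper drops silently) is, if anything, slightly more careful than the original.
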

\vspace{-1pt}
\begin{proof}
	The proof is similar to that in \cite{hu2024secure} and we omit it for brevity.
\end{proof}

Based on {\bf Proposition \ref{prpo:GA f}}, the FIV $\boldsymbol{\Delta}_f$ can be iteratively updated via the PGA method which converges to a locally optimal solution.
\begin{remark}[Algorithm convergence and computational complexity]
	\emph{First, we consider the algorithm convergence. For solving (P1.2) via the PGA method, the value of the objective function $R_{{\rm S}}(\mathbf{w}_{\rm B}, \boldsymbol{\Delta}_f, f_{\rm c})$ is non-decreasing over iterations.
	In addition, the function $R_{{\rm S}}(\mathbf{w}_{\rm B}, \boldsymbol{\Delta}_f, f_{\rm c})$ under the power and frequency switching constraints is upper-bounded by $R_{{\rm S}}(\mathbf{w}_{\rm B}, \boldsymbol{\Delta}_f, f_{\rm c}) \le \log_2\big( 1 + \frac{ NP_{\rm max}|g_{{\rm B},f_0}|^2}{\sigma_0^2}\big)$, which ensures the convergence of the proposed optimization algorithm.
	Next, we analyze the computational complexity of the proposed algorithm which is determined by three parts: 1) eigenvalue decomposition and matrix inversion in obtaining the optimal beamformer; 2) the one-dimensional linear search in \eqref{eq:1D search}; and 3) calculating the gradients $\nabla_{\boldsymbol{\Delta}_f}\widetilde{R}_{{\rm S}}(\boldsymbol{\Delta}_f)$ and $\nabla_{f_{\rm c}}{\widehat{R}_{{\rm S}}(f_{\rm c})}$.
	First, the computational complexity of the eigenvalue decomposition and matrix inversion for obtaining the optimal beamformer $\mathbf{w}_{\rm B}$ is $ \mathcal{O}\big(N^3\big) $.
	Second, the computational complexity of the one-dimensional linear search for the optimal carrier frequency is given by $ \mathcal{O}\big(N_fMN\big) $.
	Third, for obtaining the gradient $\nabla_{\boldsymbol{\Delta}_f}\widetilde{R}_{{\rm S}}(\boldsymbol{\Delta}_f)$, the computational complexity is in the order of $ \mathcal{O}\big(MN\big) $ \cite{ranjan2023gradient}.
	Hence, the overall computational complexity is approximately {\small$\mathcal{O}\Big(T_{\rm BCD}\big(N^3+\big(N_f + T_{\rm GA}^{\rm FIV}\big)MN \big)\Big)$, where $T_{\rm BCD}$} and $T_{\rm GA}^{\rm FIV}$ denote the number of iterations in the BCD and PGA methods, respectively.}
\end{remark}

\vspace{-8pt}
\subsection{Discussion: CSI Acquisition for Eve}
\label{sec:Extension to CSI Acquisition for Eve}
The CSI acquisition for Eve is critical for the design of carrier frequency and frequency offsets across individual antenna in Problem (P1), which can be practically acquired by two main approaches, namely, radar echo-based sensing and RF leakage-based detection.
For the first approach, Alice can actively transmit probing signals and estimate Eve’s CSI from the reflected echoes by using e.g., subspace-based localization techniques \cite{su2023sensing}.
Alternatively, Eve’s location can be inferred by exploiting the RF leakage from the local oscillator (LO) of a passive receiver \cite{mukherjee2012detecting}.
Specifically, owing to hardware constraints, a small portion of the LO signal leaks through the RF front end into the environment, even when Eve remains passive.
Alice can detect such leakage via energy detection or advanced statistical methods such as the generalized likelihood ratio test.
Once power leakage is detected, spatially distributed antenna observations combined with subspace-based techniques can be used to estimate the location of Eve.

In practice, the CSI acquisition for Eves may not be perfect, which thus necessities robust beamforming design for enhancing PLS performance.
Specifically, the actual channel of the $m$-th Eve can be modelled as
\begin{equation}
	\mathbf{h}_{{\rm E},m} = \hat{\mathbf{h}}_{{\rm E},m} + \Delta\mathbf{h}_{{\rm E},m}, m \in \mathcal{M}, 
	\label{eq:estimation error}
\end{equation}
where $\hat{\mathbf{h}}_{{\rm E},m}$ is the estimated Eve channel used for beamforming design and $\Delta\mathbf{h}_{{\rm E},m}$ denotes the channel estimation error, satisfying $\Delta\mathbf{h}_{{\rm E},m} \in \mathcal{U}_m \triangleq \{||\Delta\mathbf{h}_{{\rm E},m}||_2 \le \xi_{m} \}$.
Herein, $\xi_{m}$ represents the radii of the uncertainty regions.
Then, we aim to maximize the minimal secrecy rate of the uncertainty channel of the $M$ Eves, which can be formulated as
\begin{subequations}
	\begin{align}
		&({\rm P5}) \max_{\{f_{\rm c}, \boldsymbol{\Delta}_f, \mathbf{w}_{\rm B}(t)\}} \Big\{R_{{\rm B}} - \max_{\mathcal{U}_m} R_{{\rm E}}\Big\}\\
		&~~~~~~~~~~~~~\text {s.t.}~\eqref{subeq:power constraint}, \eqref{subeq:carrier constraint}, \eqref{subeq:frequency constraint}. \notag
	\end{align}
\end{subequations} 
By introducing auxiliary variables $s$, (P5) can be rewritten as
\begin{subequations}
	\begin{align}
		&({\rm P6}) \max_{\{f_{\rm c}, \boldsymbol{\Delta}_f, \mathbf{w}_{\rm B},s\}} R_{{\rm B}}- s,\\
		&~~~~~~~~~~~~\text {s.t.}~R_{{\rm E}} \le s, \forall \Delta\mathbf{h}_{{\rm E},m} \in \mathcal{U}_m  \label{eq:robust constraint},\\
		&~~~~~~~~~~~~~~~~\eqref{subeq:power constraint}, \eqref{subeq:carrier constraint}, \eqref{subeq:frequency constraint}. \notag
	\end{align}
\end{subequations}

\begin{figure}[t]
	\centering
	\vspace{-14pt}
	\includegraphics[width=0.38\textwidth]{./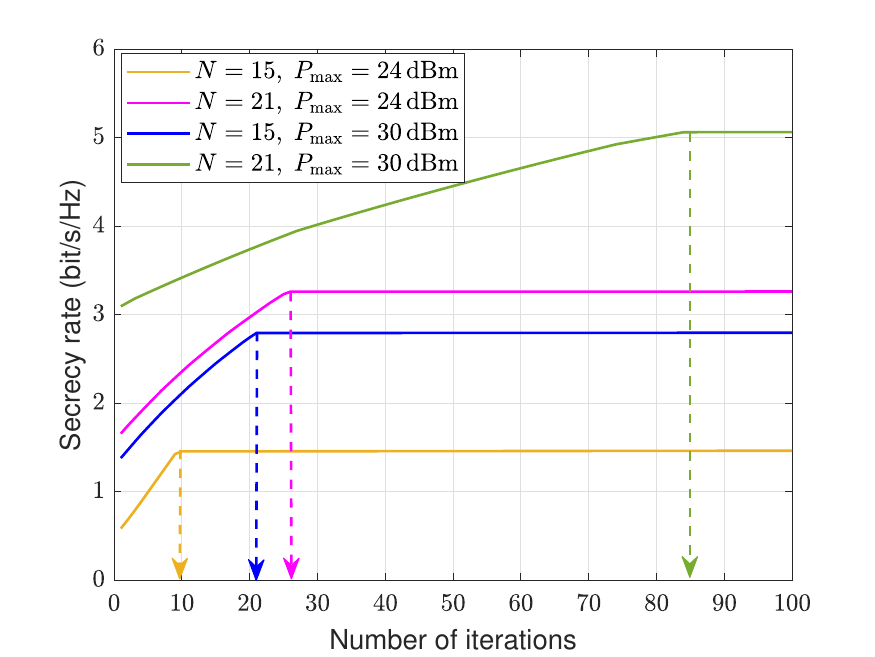}
	\caption{Convergence behavior of proposed algorithm} 
	\label{Fig:convergence}
	\vspace{-16pt}
\end{figure}
As compared to Problem (P1), (P6) introduces an additional non-convex constraint \eqref{eq:robust constraint}.
The projection gradient descent method used for the subproblems of (P1) is no longer applicable, and the location uncertainty of Eves results in \emph{infinitely} many non-convex constraints in \eqref{eq:robust constraint} due to the continuous uncertainty sets, which makes (P6) intractable.
One typical solution approach is by using the S-Procedure and sign-definiteness to transform the infinitely many constraints into a finite number of constraints (see, e.g., \cite{zhou2020framework}), which allows the use of the BCD method for solving Problem (P6)with details omitted for brevity.
The impact of imperfect CSI on the secrecy-rate performance of the proposed FSA will be evaluated through numerical results in Section~\ref{sec:numerical results} (see Fig.~\ref{Fig:bound}).
\vspace{-12pt}
\section{Numerical Results}
\label{sec:numerical results}
\subsection{System Parameters}
The system parameters are set as follows, unless specified otherwise.
We consider that Alice operates at basic frequency band $f_0 = 3.5$ GHz, who transmits confidential information to a Bob located at $(0^{\circ}, 40~\text{m})$.
In addition, $M = 2$ Eves are considered in this section, which are located at $(0.71^{\circ}, 41~\text{m})$ and $(-0.94^{\circ}, 40~\text{m})$.
Moreover, the maximum frequency offset at each antenna is $\Delta_{{f}_{\max}} = 2.5$ MHz, while the maximum carrier frequency that can be switched is $f_{\rm H} = 7$ GHz.
Finally, and the maximal transmit power and the bandwidth of baseband signals are set as $P_{\max} = 30$ dBm and $B = 0.5$ kHz.
For performance comparison, we consider the following benchmark schemes: 1) FPAs; 2) MAs; and 3) FDAs.

\begin{figure}[t]
	\centering
	\vspace{-14pt}
	\includegraphics[width=0.38\textwidth]{./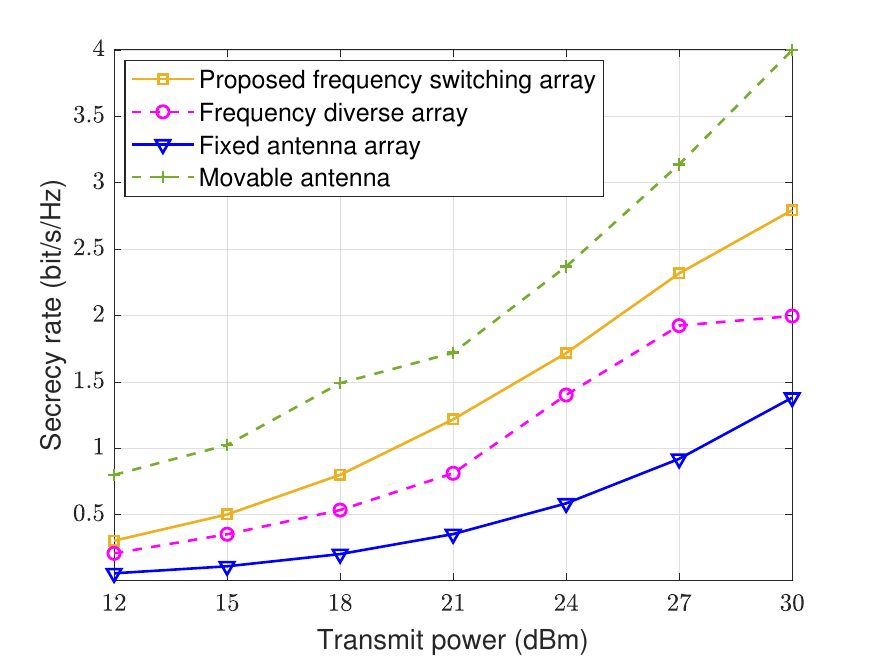}
	\caption{Secrecy rate versus transmit power} 
	\label{Fig:Power}
	\vspace{-5pt}
\end{figure}
\begin{figure}[t]
	\centering
	\vspace{-10pt}
	\includegraphics[width=0.38\textwidth]{./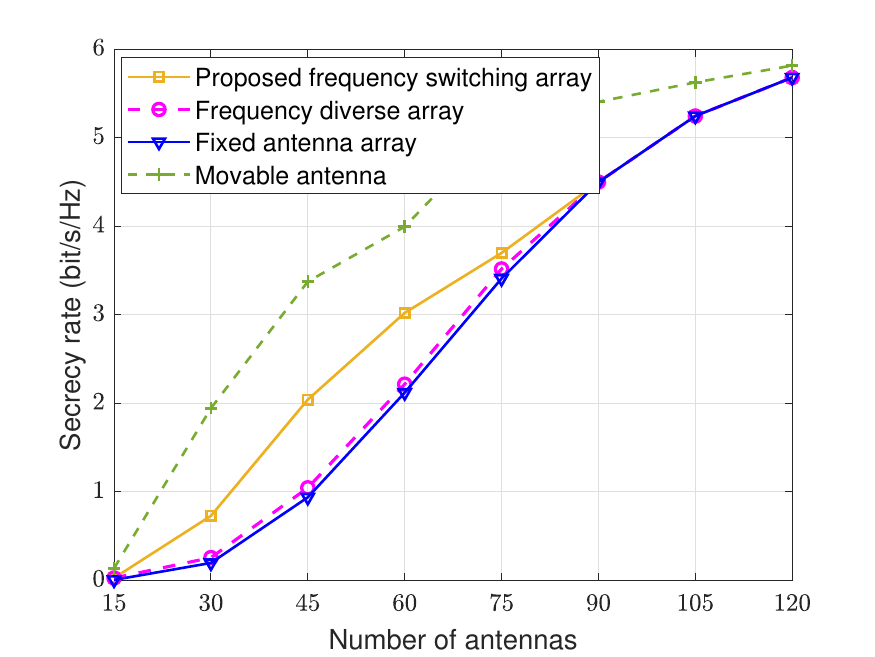}
	\caption{Secrecy rate versus number of antennas} 
	\label{Fig:antenna}
	\vspace{-10pt}
\end{figure}
\begin{figure}[t]
	\centering
	\vspace{-14pt}
	\includegraphics[width=0.38\textwidth]{./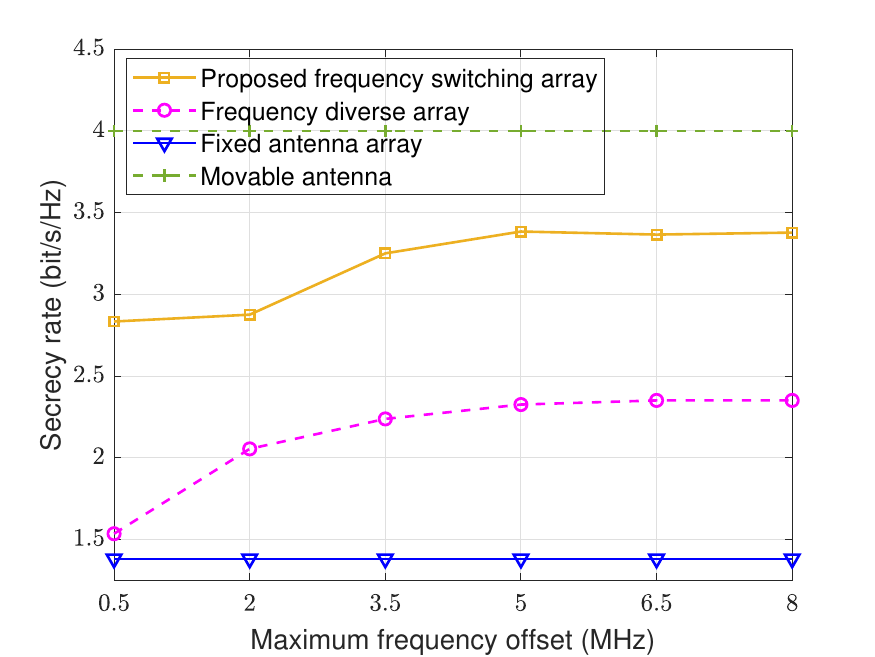}
	\caption{Secrecy rate versus frequency offset} 
	\label{Fig:offset}
	\vspace{-5pt}
\end{figure}
\begin{figure}[t]
	\centering
	\vspace{-10pt}
	\includegraphics[width=0.38\textwidth]{./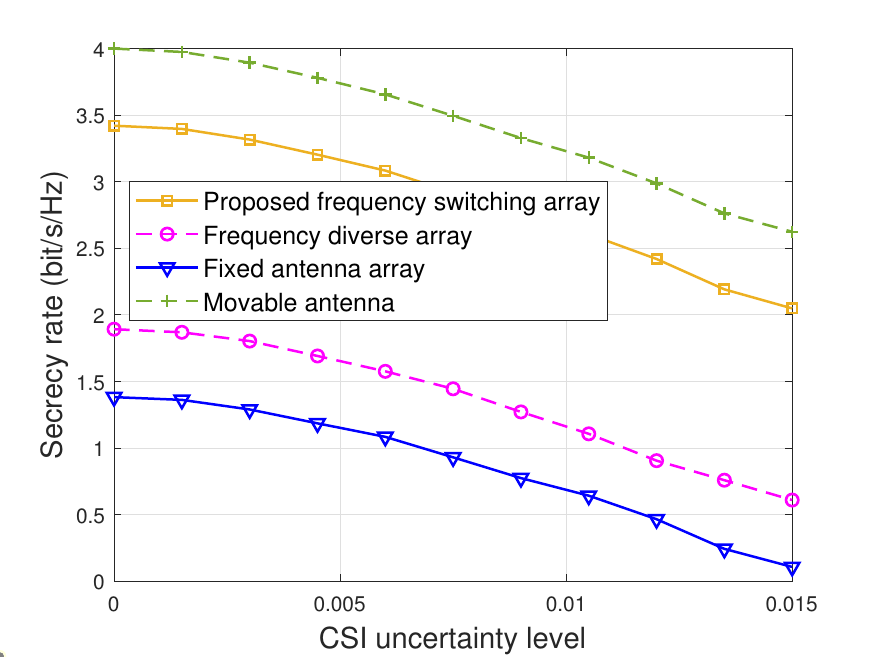}
	\caption{Secrecy rate versus CSI uncertainty level} 
	\label{Fig:bound}
	\vspace{-10pt}
\end{figure}

\vspace{-10pt}
\subsection{Performance Comparison}
In Fig. \ref{Fig:convergence}, we plot the convergence behavior of proposed optimization algorithm for the cases of $N \in \{15, 21\} $ and $P_{\max} \in \{24, 30\}$ dBm.
It is observed that the proposed algorithm converges to a stationary point within $100$ iterations for all cases, showcasing the convergence of optimization algorithm.
In addition, a larger number of antennas (or transmit power) requires more iterations. This is because a greater number of antennas results in a higher achievable secrecy rate, which, under the given convergence threshold and gradient descent step size, requires more iterations to reach convergence.

In Fig. \ref{Fig:Power}, we plot the achievable secrecy rate versus transmit power with $N = 15$. Several important observations are made as follows.
First, the achievable secrecy rates of all schemes increase with the transmit power.
Second, among all benchmark schemes, the proposed FSA achieves a higher secrecy rate than both FPAs and FDAs, and is only slightly inferior to that of MAs.
This is because, although the proposed FSA has comparable beam-control capability to MAs, it suffers from power attenuation due to the increased carrier frequency.
However, as compared to FDAs and FPAs, the proposed FSA offers higher beam-control flexibility, leading to better secrecy-rate performance.
Finally, it is observed that at the low-SNR (i.e., transmit power) regime, the secrecy-rate performance gain of the proposed FSA and MAs over FDAs and FPAs is not significant.
This is because, under low-SNR conditions, the noise power dominates, and the gain from controlling null-steering over Eves is negligible as compared to noise power.

In Fig. \ref{Fig:antenna}, we plot the achievable secrecy rate versus the number of antennas.
First, it is observed that the achievable secrecy rates for all schemes increase with the number of antennas at Alice. 
Second, when the number of antennas is sufficiently large, the achievable secrecy rates of all schemes are nearly identical.
This can be intuitively understood, as the channel correlation between the two Eves and Bob becomes weaker even under the FPA architecture, making it difficult for the Eves to effectively intercept confidential information.
As a result, adjusting frequency or antenna positions no longer provides significant secrecy-rate performance gain.

In Fig. \ref{Fig:offset}, we plot the achievable secrecy rate versus maximum frequency offset.
Several key observations are made as follows.
First, the achievable secrecy rates of the proposed FSA and the FDA increase with the maximum frequency offset, while those of MAs and FPAs remain unchanged.
This is intuitively expected since a larger frequency offset allows the FSA and the FDA to more flexibly control the beam pattern, enabling null-steering over the two Eves.
In contrast, the secrecy-rate performance of MAs and FPAs is independent of the maximum frequency offset.

In Fig. \ref{Fig:bound}, we plot the secrecy rate versus CSI uncertainty level.
We assume that the channel estimation error defined in \eqref{eq:estimation error} satisfies $\Delta\mathbf{h}_{{\rm E},m} \sim \mathcal{C N}\left(0, \mathbf{\Sigma}_m\right)$.
Herein, $\boldsymbol{\Sigma}_m=\varepsilon_m^2|\hat{\mathbf{h}}_{{\rm E},m}|_2^2 \mathbf{I}_N$, where $\varepsilon_m \in [0,1]$ represents the so-called CSI uncertainty level.
From Fig. \ref{Fig:bound}, it is shown that with the increase of CSI uncertainty level, the secrecy rates of all schemes decrease, which highlights the necessity of robust beamforming design.
Moreover, the proposed FSA consistently outperforms both the fixed array and FDA in terms of secrecy rate, thereby demonstrating its effectiveness in enhancing PLS and its superior robustness.

\vspace{-5pt}
\section{Conclusions}
In this paper, we proposed a new FSA to enhance the performance of PLS systems, which can effectively reduce the hardware and signal processing cost of MAs, while enjoying flexible DoFs in the beam control.
An optimization problem was formulated to maximize the secrecy rate at Bob by jointly designing the transmit beamformer, carrier frequency, and the FIV.
For ease of analysis, we first considered a special case where one Eve and MRT-based beamformer were adopted and then a secrecy-guaranteed problem with null-steering towards Eve was formulated.
We obtained a closed-form optimal solution to the secrecy-guaranteed problem and revealed that thanks to the range-dependent beamforming, the proposed FSA could flexibly realize null-steering over Eve even when Bob and Eve were located at the same angle.
Then, to solve the non-convex problem for the general case, we designed an efficient optimization algorithm by leveraging the BCD technique and PGA method.
Numerical results demonstrated the convergence of the proposed optimization algorithm and confirmed that the proposed FSA can effectively enhance PLS performance as compared to various benchmark schemes.

\bibliographystyle{IEEEtran}
\bibliography{IEEEabrv}

\end{document}